\definecolor{myurlcolor}{rgb}{0,0,0.7}
\newcommand{\tinyspace}{\mspace{1mu}}
\newcommand{\proj}[1]{| #1\rangle\!\langle #1 |}
\DeclareMathOperator{\trace}{Tr}
\newcommand{\Ptr}[2]{\trace_{#1}\Pa{#2}}
\newcommand{\Tr}[1]{\Ptr{}{#1}}
\newcommand{\Pa}[1]{\left(#1\right)}
\newcommand{\abs}[1]{\left\lvert\tinyspace #1 \tinyspace\right\rvert}
\newcommand{\norm}[1]{\left\lVert #1 \right\rVert}
\theoremstyle{plain}
\newtheorem{thm}{Theorem}
\newtheorem{lem}[thm]{Lemma}
\newtheorem{prop}[thm]{Proposition}
\newcommand*{\myproofname}{Proof}
\def\ot{\otimes}
\def\complex{\mathbb{C}}
\def\real{\mathbb{R}}
\def\cH{\mathcal{H}}
\begin{document}

 \author{Kaifeng Bu}
 \email{bkf@zju.edn.cn}
 \affiliation{School of Mathematical Sciences, Zhejiang University, Hangzhou 310027, PR~China}

 \author{Heng Fan}
 \email{hfan@iphy.ac.cn}
\affiliation{Beijing National Laboratory for Condensed Matter Physics, Institute of Physics, Chinese Academy of Sciences, Beijing 100190, China}
\affiliation{Collaborative Innovation Center of Quantum Matter, Beijing 100190, China}
\author{Arun Kumar Pati}
 \email{akpati@hri.res.in}
\affiliation{Quantum Information and Computation Group\\
Harish-Chandra Research Institute Chattnag Road, Jhusi,
Allahabad-211019 India}

 \author{Junde Wu}
 \email{wjd@zju.edn.cn}
 \affiliation{School of Mathematical Sciences, Zhejiang University, Hangzhou 310027, PR~China}
%


\title{Resource Theory of Special Antiunitary Asymmetry }

\begin{abstract}
We propose the resource theory of a special antiunitary asymmetry in quantum theory. The notion of antiunitary asymmetry, in particular, ${\cal PT}$-asymmetry
is  different from the
usual resource theory for asymmetry about unitary representation of
a symmetry group, as the ${\cal PT}$ operator is an antiunitary operator with ${\cal P}$ being any self-inverse unitary and ${\cal T}$ being the time-reversal operations.
Here, we introduce the
${\cal PT}$-symmetric states, ${\cal PT}$-covariant operations and ${\cal PT}$-asymmetry measures. For single qubit system, we find duality relations
between the ${\cal PT}$-asymmetry measures and the coherence. Moreover, for two-qubit states
we prove the duality relations between the ${\cal PT}$-asymmetry measures and entanglement measure such as the concurrence. This gives a
resource theoretic interpretation to the concurrence which is lacking till today. Thus, the ${\cal PT}$-asymmetry measure and entanglement
can be viewed as two sides of an underlying resource.
Finally, the ${\cal PT}$-symmetric dynamics is discussed and some open questions are addressed.
\end{abstract}

\maketitle


{\it Introduction.--} Quantum resource theories \cite{Oppenheim13, FBrandao15} have played a pivotal role in the development and quantitative understanding of various physical phenomena
in quantum physics and quantum information theory.  A resource theory consists of two basic elements: free operations and free states. Any operation (or state) is  dubbed as
a resource if it falls out of the set of free operations (or the set of free states).  The most significant resource theory is entanglement \cite{HorodeckiRMP09}, which is
a basic resource for various quantum information processing protocols, such as
the superdense coding \cite{Bennett1992}, teleportation \cite{Bennett1993} and remote state preparation \cite{pati2000}.
The other notable examples include the resource theories of thermodynamics \cite{Fernando2013},
asymmetry \cite{Bartlett2007,Gour2008,Gour2009,Marvian2012,Marvian2013,Marvian14,Marvian2014}, coherence
\cite{Baumgratz2014,Girolami2014,Streltsov2015,Winter2016,Killoran2016,Napoli2016,Chitambar2016,Chitambar2016a} and steering \cite{Rodrigo2015}. The main advantages of
having a resource theory for some physical quantity are the succinct understanding of various physical processes and operational quantification of the relevant resources at ones
disposal.

The Hermiticity is one fundamental requirement of quantum mechanics for the Hamiltonian
of a quantum system, which guarantees that the energies are real and the total probability of the quantum state
is conserved during the evolution of the system. However, it has been proved that a broad class of non-Hermitian Hamiltonian
 with ${\cal PT}$-symmetry can also have real spectra and probability conservation by redefined
 inner product \cite{Bender1998,Bender1999,Bender2002,Bender2004,Bender2006,Bender2007}, where $\mathcal{P}$ denotes the parity operator and $\mathcal{T}$ denotes the time reversal
 operator. This implies that ${\cal PT}$-symmetric theory constitutes a complex generalization
 of conventional quantum mechanics \cite{Bender2002}.
Moreover, in the system with ${\cal PT}$-symmetric
non-Hermitian Hamiltonian, a number of interesting phenomena and applications
appear in both  classical and quantum regimes, such as undirectional invisibility \cite{Lin2011,Feng2013,Yin2013},
non-Hermitian
Bloch oscillation \cite{Longhi2009,Martin2015},
perfect laser absorbers \cite{Chong2010,Longhi2010,Longhi2014},
ultrafast quantum state transformation \cite{Bender2007PRL}, quantum state discrimination with single-shot measurement \cite{Bender2012} and the potential violation
of the no-signalling principle \cite{Lee2014,Li2016}.
However, most research  focus on  the  ${\cal PT}$-symmetric Hamiltonian,  never
consider the quantum state with ${\cal PT}$-symmetry. Thus,
 the following questions arise: how to define
a  ${\cal PT}$-symmetric quantum state, what is the physical meaning of ${\cal PT}$-symmetric
 states and how to define measures of ${\cal PT}$-asymmetry.

Recently, the quantification of time reversal asymmetry  \cite{Gour2009JMP} and  $\mathcal{CPT}$ asymmetry \cite{Skotiniotis2013,Skotiniotis2014} have been considered in
antiunitary  and unitary representations, respectively. However, there still remains a question
in which representation to choose those relevant operations \cite{Piotr2015}.
In this work, we use the
framework of quantum resource theory to quantify ${\cal PT}$-asymmetry and investigate the relationship
between ${\cal PT}$-asymmetry measures, quantum coherence and entanglement. Note that here ${\cal P}$ is a self-inverse unitary operator
(need not be parity operator) and ${\cal T}$ is time-reversal operator.
 ${\cal PT}$ operator can be realized as  a special kind of  antiunitary operator \cite{Porter1965,Bender2002JPA}, which
 is in contrast with
the resource theory of asymmetry  on
the unitary representation of a symmetric group \cite{Bartlett2007,Gour2008,Gour2009,Marvian2012,Marvian2013,Marvian14,Marvian2014}.
Thus, the resource of ${\cal PT}$-asymmetry will be a special
kind of resource theory  of antiunitary asymmetry.
Though we cannot tensor the antilinear operator with the identity operator consistently, because
antilinear operators are nonlocal, nevertheless they have been used to
measure entanglement of a given bipartite state \cite{Hill1997,Wootters1998,Uhlmann2000,Coffman2000,Osterloh2005,Osborne2006}. And there is
a famous entanglement measure--the concurrence \cite{Wootters1998}, which is indeed
constructed from antilinear operators. It is quite satisfying that the resource theory of antiunitary asymmetry provides a
unified view of two fundamental
resources of quantum world such as the coherence and entanglement.
For single qubit, we reveal a duality relation between the ${\cal PT}$-asymmetry measure and the coherence.
For two-qubit pure states, we prove duality relations between the ${\cal PT}$-asymmetry measures and entanglement measure such as the
concurrence.  Amazingly, we find that the pure
bipartite state is maximally entangled if and only if it is
${\cal PT}$-symmetric. Therefore, entanglement is a special ${\cal PT}$-symmetry in some sense.
Furthermore, as $\mathcal{K}=*$ is an unphysical operator, that is $\mathcal{K}$ cannot be realized
in a physical system, then it is hard to calculate the ${\cal PT}$-asymmetry measure. However,we show that
via the embedding quantum simulator \cite{Casanova2011,Candia2013,Georgescu2014,Zhang2015a,Loredo2016,Chen2016}, the ${\cal PT}$ asymmetry measure can be calculated
efficiently.


\noindent
\emph{~~ ${\cal PT}$-symmetric state. --}
 Consider the self-inverse unitary operator $\mathcal{P}$ and  time reversal operator $\mathcal{T}$, where
$\mathcal{P}$ and $\mathcal{T }$ satisfy the following condition:(1) $\mathcal{P}=\mathcal{P}^\dag$, $\mathcal{P}^2=I$,
(2) $\mathcal{T}=U\mathcal{K}$, where $U$ is a unitary operator with $U=U^t$ and $\mathcal{K}=*$ is the complex conjugation and
 (3) $[\mathcal{P},\mathcal{T}]=0$.
Note that any antiunitary operator $\Theta$ with $\Theta= \Theta^\dag=\Theta^{-1}$ can be written in the form $V\mathcal{K}$,
where $V$ is a unitary operator with $V=V^t$ and $\mathcal{K}$ is the complex conjugation with respect
to a given basis. Such antiunitary operator is called conjugation and plays an important role in quantum information
 theory \cite{Uhlmann2000,Uhlmann2016}. It is easy to
see that such conjugation is equivalent to the ${\cal PT}$ operator defined above. Thus, the resource  theory of
${\cal PT}$-asymmetry considered in this work is  a special kind of anitunitary asymmetry resource theory and
may indicate the way towards formulating the general resource theory of anitunitary asymmetry.

Throughout this paper, we assume that self-inverse unitary operator and time reversal operator
always satisfy these conditions.
Given a quantum state $\rho$, once we apply the operations $\mathcal{P}$ and $\mathcal{T}$, the final state will
be $\mathcal{PT}\rho \mathcal{PT}$ (Since $\mathcal{PT}\rho \mathcal{PT}=\mathcal{P}U\mathcal{K}\rho \mathcal{K}U^* \mathcal{P}=\mathcal{P}U\rho^* U^\dag \mathcal{P}$ is a quantum state).
If the initial state is equal to the final state, that is $[\rho, {\cal PT}]=0$, then we call the state $\rho$ is ${\cal PT}$-symmetric
state. If the initial state is not equal to the final state, that is $[\rho, {\cal PT}]\neq0$, then we call the state $\rho$
is ${\cal PT}$-asymmetric. Moreover, we denote the set of all ${\cal PT}$-symmetric states by $Sym(\mathcal{P},\mathcal{T})$.

\noindent
\emph{~~ ${\cal PT}$-covariant operation. --}
To characterize the quantum operation which transform the ${\cal PT}$-symmetric states to the
 ${\cal PT}$-symmetric states, we distinguish quantum operations with and without subselection.
Any quantum operation  $\Phi$ can be described using a set of Kraus operators $\set{K_\mu}$ with
$\Phi(\cdot)=\sum_\mu K_\mu(\cdot) K^\dag_\mu$.
The operation $\Phi: \mathcal{D}(\cH)\to \mathcal{D}(\cH)$ is called ${\cal PT}$-covariant if $\Phi(\mathcal{PT}(\cdot)\mathcal{PT})=\mathcal{PT}\Phi(\cdot)\mathcal{PT}$,
that is $[\Phi, \mathcal{PT}]=0$. Such operations are denoted by  $\Phi_{\mathcal{PT}_{CO}}$. (Of course, we can also consider the ${\cal PT}$-covariant operation with
different $\mathcal{PT}$, that is $\Phi(\mathcal{P}_1\mathcal{T}_1(\cdot)\mathcal{P}_1\mathcal{T}_1)=\mathcal{P}_2\mathcal{T}_2\Phi(\cdot)\mathcal{P}_2\mathcal{T}_2$.)
Besides this, we also need to consider the quantum operations with subselection. Thus, a quantum operation
$\Phi$ is called selective $\mathcal{PT}$-covariant if the Kraus operators $\set{K_\mu}$ of $\Phi$ satisfy
$K_\mu(\mathcal{PT}(\cdot)\mathcal{PT})K^\dag_\mu=\mathcal{PT}K_\mu(\cdot)K^\dag_\mu \mathcal{PT}$ for any $\mu$.

\noindent
\emph{~~ The measure of ${\cal PT}$-asymmetry for a state. --}
When the state is ${\cal PT}$-asymmetric, that is it breaks the ${\cal PT}$-symmetry, we want to
quantify how much the ${\cal PT}$-symmetry is broken by the given state. Thus, we need to introduce the
${\cal PT}$ asymmetry measure, like the entanglement measure \cite{Vedral1998,Plenio2007}, asymmetry measure \cite{Marvian2014,Marvian2012} and
coherence measure \cite{Baumgratz2014,Girolami2014}. Now, we list the conditions that any function $\Gamma$ from  a state to a real
number needs to satisfy in order to be a proper ${\cal PT}$-asymmetry measure.

For any proper ${\cal PT}$-asymmetry measure $\Gamma$, it needs to satisfy the following conditions:

(C1) $\Gamma(\rho, \mathcal{ PT})=0$ iff $[\rho, \mathcal{PT}]=0$.

(C2) Monotone under ${\cal PT}$-covariant operations  $\Phi_{\mathcal{PT}_{CO}}$, that is
$\Gamma(\Phi_{\mathcal{PT}_{CO}}(\rho),\mathcal{PT})\leq \Gamma(\rho, \mathcal{PT})$.

(C2') Monotone under selective ${\cal PT}$-covariant operations:
$\sum_\mu p_\mu \Gamma(\rho_\mu, \mathcal{PT})\leq \Gamma(\rho,\mathcal{PT})$,
where $K_\mu(\mathcal{PT}(\cdot)\mathcal{PT})K^\dag_\mu=\mathcal{PT}K_\mu(\cdot)K^\dag_\mu \mathcal{PT}$ and
$\rho_\mu=K_\mu\rho K^\dag_\mu/p_\mu$ with $p_\mu=\mathrm{Tr}(K_\mu \rho K^\dag_\mu)$.

(C3) Convexity: $\Gamma(\sum_n p_n \rho_n, \mathcal{PT})\leq \sum_n p_n\Gamma(\rho_n, \mathcal{PT})$, where
$\set{\rho_n}$ is a set of states and $p_n\geq 0$ with $\sum_np_n=1$.

The condition (C1) means the ${\cal PT}$-asymmetry measure vanishes if and only if this state is
${\cal PT}$-symmetric. We can weaken this condition as (C1'): $\Gamma(\rho, \mathcal{PT})=0$ if $[\rho, \mathcal{PT}]=0$.
Naturally,  ${\cal PT}$ -asymmetry measure cannot increase  under the ${\cal PT}$-covariant operations, thus the conditions (C2) is necessary.
Furthermore, the condition $(C2')$ implies that the average ${\cal PT}$-asymmetry after the ${\cal PT}$-covariant operations
with subselection cannot be greater than the ${\cal PT}$-asymmetry of the initial state. This condition
may be important in real experiment so that we list this condition here.
The condition (C3) is the convexity of the ${\cal PT}$-asymmetry measure, which is  the requirement of any proper
asymmetry monotone \cite{Marvian2012}.

We now give several ${\cal PT}$-asymmetry measures via the relative entropy, the skew information and the fidelity measures.

\noindent
\emph{~~ Relative entropy of ${\cal PT}$-asymmetry. --}
The quantum relative entropy for states $\rho$ and $\sigma$ is defined as $S(\rho||\sigma):= \Tr{\rho\log\rho}-\Tr{\rho\log \sigma}$.
The relative entropy of
$\mathcal{PT}$-asymmetry measure $\Gamma_r$ is defined as
\begin{eqnarray}
\Gamma_r(\rho, \mathcal{PT})=\min_{\sigma\in Sym(\mathcal{P}, \mathcal{T})}S(\rho||\sigma).
\end{eqnarray}
First, we get a closed form expression of $\Gamma_r$ to avoid the
minimization and it is given by
\begin{eqnarray}
\Gamma_r(\rho, \mathcal{PT})=S(\rho||\rho^{\mathcal{PT}})=S(\rho^{\mathcal{PT}})-S(\rho),
\end{eqnarray}
where $\rho^{\mathcal{PT}}=\frac{1}{2}(\rho+\mathcal{PT}\rho \mathcal{PT})$ is ${\cal PT}$-symmetric and
$\Gamma_r$ fulfills the conditions (C1), (C2), (C2') and (C3) as
a proper ${\cal PT}$-asymmetry measure (See Appendix \ref{ap:Re_pt}). Then, for any state $\rho$,
$\Gamma_r(\rho, \mathcal{PT})=S(\rho^{\mathcal{PT}})-S(\rho)\leq 1$, as
$S(\sum_i p_i\rho_i)\leq \sum_i p_iS(\rho_i)+H(\set{p_i})$ \cite{Nielsen10} and $S(\mathcal{PT}\rho \mathcal{PT})=S(\rho)$.
Since $S(\sum_i p_i\rho_i)= \sum_i p_iS(\rho_i)+H(\set{p_i})$ is equivalent to that  $\rho_i$ have orthogonal
supports \cite{Nielsen10}, then $\Gamma_r(\rho, \mathcal{PT})=1$ iff $\rho \bot \mathcal{PT}\rho \mathcal{PT}$.

\noindent
\emph{~~ Skew information of ${\cal PT}$-asymmetry. --}
Let us define the skew information of ${\cal PT}$-asymmetry  $\Gamma_s$ as
\begin{eqnarray}
\Gamma_s (\rho, \mathcal{PT}):&=&-\frac{1}{2}\Tr{[\rho^{1/2},\mathcal{PT}]^2}\nonumber\\
&=&1-\Tr{\rho^{1/2}\mathcal{PT}\rho^{1/2}\mathcal{PT}},
\end{eqnarray}
where $[\cdot,\cdot]$ denote the
commutator and $[\rho^{1/2}, \mathcal{PT}]^2=\rho+\mathcal{PT}\rho \mathcal{PT}-2\rho^{1/2}\mathcal{PT}\rho^{1/2}\mathcal{PT}$.
Note that, in the definition of Wigner-Yanase-Dyson skew information $I(\rho, O)=-\frac{1}{2}\Tr{[\rho^{1/2},O]}$,
the operator $O$ is required to be an observable \cite{Wigner1963}, that is $O$ must be a Hermitian, however
${\cal PT}$ is not a linear operator, thus ${\cal PT}$ is not an observable. Therefore, we cannot use the
properties of skew information to state that $\Gamma_s$ satisfy the conditions
(C1), (C2), (C2') and (C3). However, $\Gamma_s$ still fulfills   these conditions (See Appendix \ref{ap:Sk_pt}).
Obviously, for any state $\rho$, $\Gamma_s(\rho, PT)\leq 1$ and the equality holds iff $\rho \bot \mathcal{PT}\rho \mathcal{PT}$.

The above two quantities $\Gamma_r$ and $\Gamma_s$ are proper
${\cal PT}$-asymmetry measures ( An example is presented in Fig.\ref{fig1}). Of course, there may be other
possible ${\cal PT}$-asymmetry measure, like the
${\cal PT}$-asymmetry measure induced by the trace norm, the Hilbert Schmidt norm and so on.
Here, we introduce another interesting ${\cal PT}$-asymmetry measure defined by
the fidelity.

\noindent
\emph{~~ Fidelity measure of ${\cal PT}$-asymmetry. --}
Let us consider the ${\cal PT}$-asymmetry measure defined as
\begin{eqnarray}
\Gamma_F(\rho, \mathcal{PT})
&=&1-F(\rho, \mathcal{PT}\rho \mathcal{PT})\nonumber\\
&=&1-\Tr{\sqrt{\sqrt{\rho}\mathcal{PT}\rho \mathcal{PT}\sqrt{\rho}}},
\end{eqnarray}
which fulfils the conditions
(C1), (C2), (C2') and (C3) (See Appendix \ref{ap:F_pt}).

Based on the proof in Proposition \ref{prop:skewm} (See Appendix \ref{ap:Sk_pt}), we have $\Gamma_F\leq \Gamma_s$. Following
Ref.\cite{Uhlmann2000},
$\Gamma_F(\rho, \mathcal{PT})$ can be written as
\begin{eqnarray}\label{eq:deco_F}
\Gamma_F(\rho, \mathcal{PT})=\min \sum_k p_k\Gamma_F(\psi_k, \mathcal{PT}),
\end{eqnarray}
where the minimum is taken over all the decomposition of $\rho=\sum_k p_k\proj{\psi_k}$.
Furthermore, the optimal decomposition can be found in Ref.\cite{Uhlmann2000}.

\begin{figure}
\label{fig1}
\includegraphics[width=80mm]{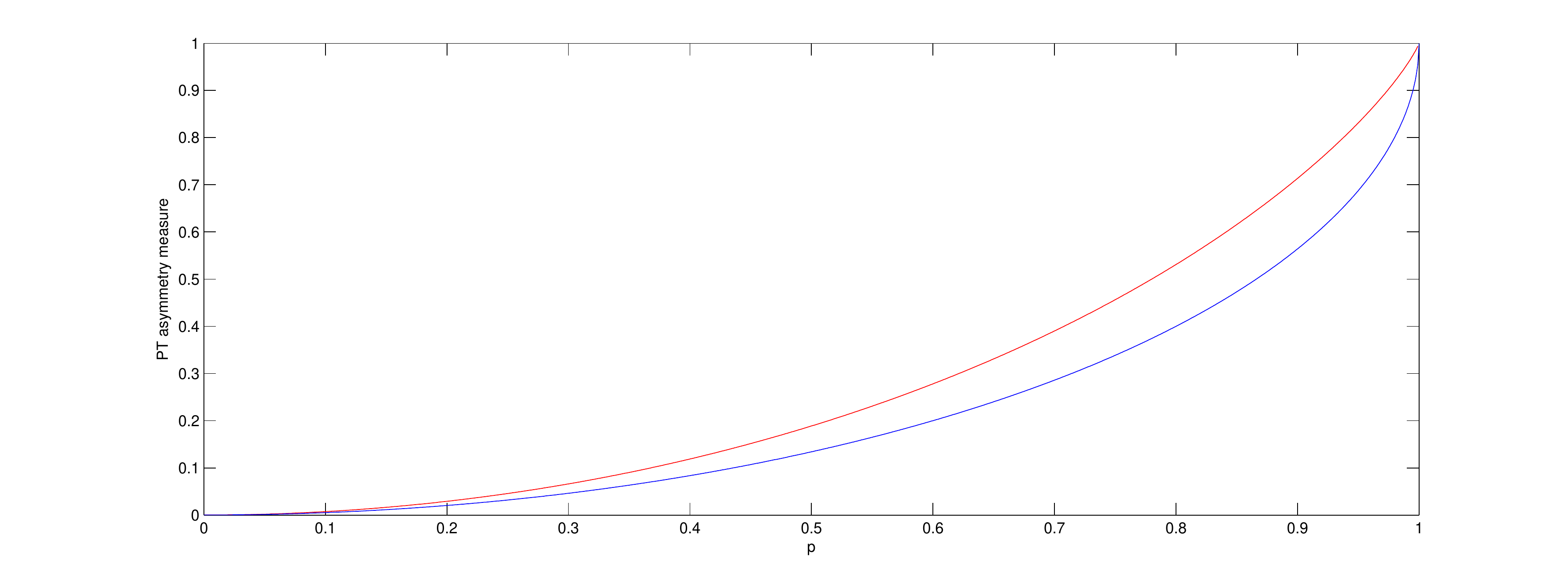}
\caption{The plot shows the ${\cal PT}$-asymmetry measure $\Gamma_r$ (red line) and $\Gamma_s$ (blue line) for qubit states
$\rho=(1-p)\mathbb{I}/2+p\proj{\psi}$ with $\ket{\psi}=\frac{1}{\sqrt{2}}(\ket{0}+\ket{1})$, $p\in[0,1]$ under the unitary operator $\mathcal{P}=\sigma_z=\left(
\begin{array}{cc}
1 & 0\\
0 & -1
\end{array}
\right)$ and time reversal operator $\mathcal{T}=*$  }
\label{fig1}
\end{figure}


{\it Duality of ${\cal PT}$-Asymmetry, Coherence and Entanglement.--}\label{sec:PTvsEnt}
Given a self-inverse unitary operator $\mathcal{P}$ and a time reversal operator $\mathcal{T}$, for any pure state
$\rho=\proj{\psi}$, we have
\begin{eqnarray}\label{eq:psk}
\Gamma_s (\psi, \mathcal{PT})&=&1-|\bra{\psi}\mathcal{PT}\ket{\psi}|^2,\\
and~~~\Gamma_F (\psi, \mathcal{PT})&=&1-|\bra{\psi}\mathcal{PT}\ket{\psi}|.
\end{eqnarray}
 This can be interpreted  as follows.
Imagine that we have two copies of a pure state $\ket{\psi}$, and one is rotated in space by a unitary
operator $\mathcal{P}$ and the other is transformed under time reversal operator $\mathcal{T}$. The final states
will be $\mathcal{P}\ket{\psi}$ and $\mathcal{T}\ket{\psi}$, and we want to know whether these final
states coincide or not. If they coincide, then this means that the effect of the parity operator
and the time reversal operator leaves the state $\ket{\psi}$ invariant, and
we say  $\ket{\psi}$ has ${\cal PT}$-symmetry. Otherwise, the state $\ket{\psi}$
breaks the ${\cal PT}$-symmetry.

Moreover, the spectrum of $\rho^{\mathcal{PT}}$ with $\rho=\proj{\psi}$ is $\set{\frac{1}{2}-\frac{1}{2}|\bra{\psi}\mathcal{PT}\ket{\psi}|,
\frac{1}{2}+\frac{1}{2}|\bra{\psi}\mathcal{PT}\ket{\psi}|}$. Thus,
\begin{eqnarray}\label{eq:pre}
\Gamma_r (\psi, \mathcal{PT})=H\Pa{\frac{1}{2}-\frac{1}{2}|\bra{\psi}\mathcal{PT}\ket{\psi}|}
\end{eqnarray}
where $H(p)=-p\log(p)-(1-p)\log(1-p)$ is the Shannon entropy for the probability
distribution $\set{p,1-p}$.

Let us consider the simplest case: a single qubit system.
We take  $\mathcal{P}=\sigma_x=\left( \begin{array}{ccc}
0& 1\\
1  & 0
\end{array} \right)$ and $\mathcal{T}=*$.
Then for pure qubit state $\ket{\psi}=(\psi_1,\psi_2)^t$, where $t$ denotes the transpose,
\begin{eqnarray*}
\Gamma_s(\psi, \mathcal{PT})&=&1-|\bra{\psi}\mathcal{PT}\ket{\psi}|^2\\
&=&1-|\psi_1\psi_2+\psi_2\psi_1|^2\\
&=&1-4|\psi_1|^2|\psi_2|^2,
\end{eqnarray*}
and
\begin{eqnarray*}
\Gamma_r(\psi, \mathcal{PT})=H(\frac{1}{2}-2|\psi_1|^2|\psi_2|^2).
\end{eqnarray*}

Thus, a pure state $\psi$  is ${\cal PT}$-symmetric iff
$|\psi_1|=|\psi_2|=1/\sqrt{2}$. This suggests that there should be a connection between
the quantum coherence \cite{Baumgratz2014} and the ${\cal PT}$-asymmetry measure. In fact, for a single qubit system we have  duality
relations between the $l_1$-norm of coherence and the ${\cal PT}$-asymmetry measures as  given by
\begin{eqnarray}
\Gamma_s(\psi, \mathcal{PT}) + C_{l_1}(\psi)^2 = 1,\nonumber\\
\Gamma_F(\psi, \mathcal{PT}) + C_{l_1}(\psi) = 1,
\end{eqnarray}
where $C_{l_1}(\psi) = \sum_{i\not= j} |\rho_{ij}| =  2 |\psi_1||\psi_2|$ is the $l_1$-norm of
coherence for a single qubit. Therefore, a maximally pure coherent state is actually a ${\cal PT}$-symmetric state.

However, in two-qubit system, we have two different ways to consider the ${\cal PT}$-asymmetry.
On the one hand, we can  construct the $\mathcal{P}$, $\mathcal{T}$ operators on 2-qubit system using $\mathcal{P}$, $\mathcal{T}$ operators on single qubit systems
like $\mathcal{P}_1\mathcal{T}_1\ot \mathcal{P}_2\mathcal{T}_2$. On the other  hand, we can construct $\mathcal{P}$, $\mathcal{T}$ operators on 2-qubit system which cannot
be constructed from single qubit systems, and this may be connected with entanglement closely.

For two-qubit  pure state $\ket{\Psi}$ the famous entanglement monotone-- the concurrence \cite{Wootters1998} is defined
as
\begin{eqnarray}\label{eq:defc}
C(\Psi)=|\bra{\Psi}\sigma_y \ot \sigma_y \mathcal{K}\ket{\Psi}|.
\end{eqnarray}
Now, we prove duality relations between the ${\cal PT}$-asymmetry measures and the concurrence.
Using the definitions of $\Gamma_s$, $\Gamma_r$ and $C(\psi)$ for any pure two qubit state, we have the following theorem.
\begin{thm}
Given a two-qubit system with self-inverse unitary operator $\mathcal{P}=\sigma_y\ot \sigma_y$
and time reversal operator $\mathcal{T}=*$,  for pure bipartite states $\ket{\Psi}$ we have
\begin{eqnarray}\label{eq:skvse}
\Gamma_s(\Psi, \mathcal{PT})+C(\Psi)^2=1,\\
\label{eq:fvse}
\Gamma_F(\Psi, \mathcal{PT})+C(\Psi)=1,
\end{eqnarray}
and
\begin{eqnarray}\label{eq:rvse}
\Gamma_r (\Psi,\mathcal{PT})=H\Pa{\frac{1}{2}-\frac{1}{2}C(\Psi)},
\end{eqnarray}
where $H(p)=-p\log(p)-(1-p)\log(1-p)$ is the Shannon entropy for the probability
distribution $\set{p,1-p}$ and $C(\Psi)$ is the concurrence for pure state $\ket{\Psi}$.

For any two-qubit mixed states $\rho$, the equalities  may not hold. However, we still have the
following inequality:
\begin{eqnarray}
\label{ineq:skvse}\Gamma_s(\rho,\mathcal{PT})+C(\rho)^2\leq1,\\
\label{ineq:fvse}\Gamma_F(\rho,\mathcal{PT})+C(\rho)\leq 1,\\
\label{ineq:rvse} \Gamma_r(\rho,\mathcal{PT})\leq H(\frac{1}{2}-\frac{1}{2}C(\rho)),
\end{eqnarray}
where $C(\rho)=\min \sum_k p_k C(\Psi_k)$ and
the minimum is taken over all the pure states decomposition of $\rho=\sum_k p_k\proj{\Psi_k}$ \cite{Wootters1998,Uhlmann2000}.

In fact, we can prove the following
\begin{eqnarray}\label{eq:coa}
\Gamma_F(\rho,\mathcal{PT})+CoA(\rho)=1,
\end{eqnarray}
where the concurrence of assistance $CoA(\rho)=\max \sum_k p_k C(\Psi_k)$ and the maximum is taken over all the
 pure states decomposition of $\rho=\sum_k p_k\proj{\Psi_k}$ \cite{Laustsen2003,Gour2005}.
 \end{thm}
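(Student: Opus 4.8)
The plan is to reduce the entire statement to the pure-state closed forms for $\Gamma_s$, $\Gamma_F$ and $\Gamma_r$ recorded above, by way of the single algebraic observation that $|\langle\Psi|\mathcal{PT}|\Psi\rangle|$ equals the concurrence. First, for pure states: with $\mathcal{P}=\sigma_y\ot\sigma_y$ and $\mathcal{T}=\mathcal{K}$ (complex conjugation) one has $\mathcal{PT}|\Psi\rangle=(\sigma_y\ot\sigma_y)\mathcal{K}|\Psi\rangle$, hence $\langle\Psi|\mathcal{PT}|\Psi\rangle=\langle\Psi|\,\sigma_y\ot\sigma_y\,\mathcal{K}|\Psi\rangle$, and therefore $|\langle\Psi|\mathcal{PT}|\Psi\rangle|=C(\Psi)$ by the very definition \eqref{eq:defc}. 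Substituting $|\langle\Psi|\mathcal{PT}|\Psi\rangle|=C(\Psi)$ into $\Gamma_s(\Psi,\mathcal{PT})=1-|\langle\Psi|\mathcal{PT}|\Psi\rangle|^2$, $\Gamma_F(\Psi,\mathcal{PT})=1-|\langle\Psi|\mathcal{PT}|\Psi\rangle|$ and $\Gamma_r(\Psi,\mathcal{PT})=H\bigl(\tfrac12-\tfrac12|\langle\Psi|\mathcal{PT}|\Psi\rangle|\bigr)$ gives \eqref{eq:skvse}, \eqref{eq:fvse} and \eqref{eq:rvse} immediately. Along the way one checks that this choice is admissible: $\sigma_y\ot\sigma_y$ is Hermitian and squares to $I$, it commutes with $\mathcal{K}$ because $\sigma_y^*=-\sigma_y$ makes the two minus signs cancel, and $(\sigma_y\ot\sigma_y)^t=\sigma_y\ot\sigma_y$, so $\mathcal{PT}$ is a legitimate conjugation.

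For the mixed-state inequalities I would combine convexity of the measures --- condition (C3), proved in the appendices --- with a single Jensen step. Fix a pure-state decomposition $\rho=\sum_k p_k\proj{\Psi_k}$ attaining $C(\rho)=\sum_k p_k C(\Psi_k)$. For $\Gamma_s$, condition (C3) together with the pure-state formula gives $\Gamma_s(\rho,\mathcal{PT})\le\sum_k p_k\Gamma_s(\Psi_k,\mathcal{PT})=1-\sum_k p_k C(\Psi_k)^2$, and convexity of $t\mapsto t^2$ gives $\sum_k p_k C(\Psi_k)^2\ge\bigl(\sum_k p_k C(\Psi_k)\bigr)^2=C(\rho)^2$, which proves \eqref{ineq:skvse}. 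For $\Gamma_F$ the pure-state value is affine in $C$, so (C3) alone yields $\Gamma_F(\rho,\mathcal{PT})\le 1-\sum_k p_k C(\Psi_k)=1-C(\rho)$, i.e.\ \eqref{ineq:fvse}. For $\Gamma_r$, note that $c\mapsto H(\tfrac12-\tfrac12 c)$ is concave on $[0,1]$ --- it is the concave function $H$ precomposed with an affine map --- so Jensen gives $\sum_k p_k H\bigl(\tfrac12-\tfrac12 C(\Psi_k)\bigr)\le H\bigl(\tfrac12-\tfrac12 C(\rho)\bigr)$, and combining with (C3) gives \eqref{ineq:rvse}.

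Finally, for the assistance identity \eqref{eq:coa} I would invoke the roof formula \eqref{eq:deco_F}, $\Gamma_F(\rho,\mathcal{PT})=\min\sum_k p_k\Gamma_F(\psi_k,\mathcal{PT})$ over all pure-state decompositions of $\rho$. Since $\Gamma_F(\psi_k,\mathcal{PT})=1-C(\psi_k)$ on each pure state, the objective equals $1-\sum_k p_k C(\psi_k)$, so minimizing it is the same as maximizing $\sum_k p_k C(\psi_k)$; by definition that maximum is $CoA(\rho)$, whence $\Gamma_F(\rho,\mathcal{PT})=1-CoA(\rho)$. Because $CoA(\rho)\ge C(\rho)$, this even re-derives \eqref{ineq:fvse}.

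Essentially all of the substance is imported rather than produced here: the closed forms of $\Gamma_s,\Gamma_F,\Gamma_r$ on pure states, the convexity of each measure, and Uhlmann's decomposition \eqref{eq:deco_F} together with its known optimal pure-state decomposition. The only points that require care --- hardly obstacles --- are reading $\langle\Psi|\mathcal{PT}|\Psi\rangle$ as the pairing $\langle\Psi|\sigma_y\ot\sigma_y\mathcal{K}|\Psi\rangle$ so that it coincides exactly with Wootters' expression, and keeping the Jensen inequality for the Shannon entropy pointed in the right direction, i.e.\ exploiting concavity --- not convexity --- of $c\mapsto H(\tfrac12-\tfrac12 c)$.
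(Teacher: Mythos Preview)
Your proof is correct and follows essentially the same approach as the paper's own proof: identify $|\langle\Psi|\mathcal{PT}|\Psi\rangle|$ with $C(\Psi)$, invoke the pure-state closed forms \eqref{eq:psk}--\eqref{eq:pre}, and then propagate to mixed states via convexity (C3), Jensen for $t\mapsto t^2$, concavity of $H$, and Uhlmann's decomposition \eqref{eq:deco_F}. The only stylistic difference is that you fix from the outset a decomposition attaining $C(\rho)=\sum_k p_k C(\Psi_k)$, whereas the paper works with an arbitrary decomposition and adds the two inequalities $\Gamma_s(\rho)\le\sum_k p_k\Gamma_s(\Psi_k)$ and $C(\rho)^2\le\sum_k p_k C(\Psi_k)^2$ so that the pure-state identity collapses the sum; both routes are equivalent and equally short.
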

The proof of this theorem is presented in the Appendix \ref{ap:PTvsEnt}.

Since the concurrence quantifies the entanglement of a pure bipartite state, thus
the above proposition shows that  a pure state has more $\mathcal{PT}$-symmetry
with $\mathcal{P}=\sigma_y\ot \sigma_y$ and $\mathcal{T}=*$ if and only if this state is more entangled, i.e.,
the pure state $\Psi$ is a $\mathcal{PT}$-symmetric state  iff  $\Psi$ is maximally entangled.
Therefore, entanglement is  a special kind of $\mathcal{P}\mathcal{T}$-symmetry in some sense.
Our formalism, the resource theory of antiunitary asymmetry, in fact, provides a unified view of two fundamental resources such as the
quantum coherence and the entanglement.

To generalize these notions, we consider the relationship between $\mathcal{P}\mathcal{T}$-asymmetry and
entanglement in multi-qubit  system.
In N-qubit system with $N\geq 2$, there exists a systematic procedure
to define entanglement monotone for pure states via three operational
building blocks \cite{Osterloh2005,Candia2013}: $\mathcal{K}=*$, $\sigma_y$ and $g_{ij}\sigma_i\sigma_j$,  where
$g_{ij}=diag\set{-1,1,0,1}$, $\sigma_0=I$, $\sigma_1=\sigma_x$,
$\sigma_2=\sigma_y$ ad $\sigma_3=\sigma_z$.
If N is even, then the simplest entanglement monotone
is $|\bra{\Psi}\sigma^{\ot N}_y\mathcal{K}\ket{\Psi}|$
and if N is odd, $|\sum_{ij}g_{ij}\bra{\Psi}\sigma_i\ot\sigma^{\ot N-1}_y\mathcal{K}\ket{\Psi}\bra{\Psi}\sigma_j\ot\sigma^{\ot N-1}_y\mathcal{K}\ket{\Psi}|$ is
the simplest entanglement monotone \cite{Osterloh2005,Candia2013}. With this construction, for N=2, we have the entanglement
monotone- the concurrence: $C(\Psi)=|\bra{\Psi}\sigma_y\ot \sigma_y \mathcal{K}\ket{\Psi}|$.
For N=3, we have the 3-tangle \cite{Coffman2000} defined as
$\tau_3(\Psi):=|\sum_{ij}g_{ij}\bra{\Psi}\sigma_i\ot\sigma^{\ot 2}_y\mathcal{K}\ket{\Psi}\bra{\Psi}\sigma_j\ot\sigma^{\ot 2}_y\mathcal{K}\ket{\Psi}|$.

Thus, we have defined entanglement monotone $\tau_N$ for any $N$-qubit system and it is easy to see that for the even and odd
cases we have the following relations:

(i) if $N=2k$, then
\begin{eqnarray}
\Gamma_s(\Psi, \mathcal{P}_2\mathcal{T})&+&\tau_N(\Psi)^2=1,\\
\Gamma_r (\Psi,\mathcal{P}_2\mathcal{T})&=& H\Pa{\frac{1}{2}-\frac{1}{2}\tau_N(\Psi)},
\end{eqnarray}
where  $\mathcal{P}_2=\sigma^{\ot N}_y$, $\mathcal{T}=*$ and $H(p)=-p\log(p)-(1-p)\log(1-p)$ is the Shannon entropy for the probability
distribution $\set{p,1-p}$.

(ii) if $N=2k+1$, then
\begin{eqnarray}
\tau_N(\Psi)&=&|-\Gamma_s(\Psi,\mathcal{P}_0\mathcal{T})+\Gamma_s(\Psi, \mathcal{P}_2\mathcal{T})\nonumber\\
&&+\Gamma_s(\Psi,\mathcal{P}_3\mathcal{T})-1|,
\end{eqnarray}
where $\mathcal{P}_i=\sigma_i\ot \sigma^{\ot N-1}_y$ and $\mathcal{T}=*$.

Since $\mathcal{K}=*$ is an unphysical operator, one may think that
we need to perform full tomography to calculate $\bra{\Psi}\mathcal{PT}\ket{\Psi}=\bra{\Psi}\mathcal{P}U\mathcal{K}\ket{\Psi}$ \cite{Casanova2011,Candia2013}.
However, based on the embedding quantum simulator (EQS), such quantity
can be calculated efficiently \cite{Casanova2011,Candia2013}.
This technique of embedding quantum simulator \cite{Casanova2011,Candia2013,Georgescu2014,Zhang2015a,Loredo2016,Chen2016}  is described as follows:
define a mapping $\mathcal{M}:\complex^{d}\rightarrow \real^{2d}$
as
\begin{equation}
\ket{\Psi}=\left(\begin{array}{cccccccc}
\psi^1_{re}+i\psi^1_{im}\\
\psi^2_{re}+i\psi^2_{im}\\
\psi^3_{re}+i\psi^3_{im}\\
\vdots
\end{array}
\right)
\longrightarrow
 \ket{\widetilde{\Psi}}=\left(
 \begin{array}{ccccccccccc}
 \psi^1_{re}\\ \psi^2_{re}\\
 \psi^3_{re}\\ \vdots \\
 \psi^1_{im}\\ \psi^2_{im}\\
 \psi^3_{im}\\ \vdots
 \end{array}
 \right)
\end{equation}
The reverse mapping is given by $\ket{\Psi}=M\ket{\widetilde{\Psi}}$,
with $M=(1,i)\ot I_{d}$ and $\mathcal{K}\ket{\Psi}=M (\sigma_z\ot I_{d})\ket{\widetilde{\Psi}}$.
Thus, one has
\begin{eqnarray}
\bra{\Psi}\mathcal{P}U\mathcal{K}\ket{\Psi}
=\bra{\widetilde{\Psi}}M^\dag  \mathcal{P}UM(\sigma_z\ot I_{d})\ket{\widetilde{\Psi}}.
\end{eqnarray}
By the embedding quantum simulator, the quantity $|\bra{\Psi}\mathcal{PT}\ket{\Psi}|$ can be
calculated efficiently, which means the $\mathcal{PT}$-asymmetry measures such as $\Gamma_r, \Gamma_s, \Gamma_F$
can be calculated efficiently (see Fig. \ref{fig2} ).

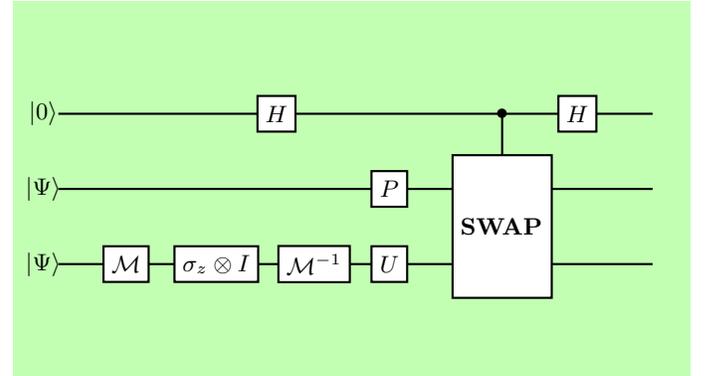
\begin{figure}[ht]
\centering
 \begin{tikzpicture}[thick]
 \fill[green!80!yellow!30](-0.5,3) rectangle (8.5,-2);
    \tikzstyle{operator} = [draw,fill=white,minimum size=1.5em]
    \tikzstyle{surround} = [fill=blue!10,thick,draw=black,rounded corners=2mm]
    \tikzstyle{block} = [rectangle, draw, fill=white,
    text width=3.5em, text centered, , minimum height=6em]
    %
    \node at (-0.1,1.5) (q0){$\ket{0}$};
    \node at (-0.1,0.5) (q0){$\ket{\Psi}$};
    \node at (-0.1,-0.5) (q1){$\ket{\Psi}$};
    \node at (6,1.5) (q5) {$\bullet$};
    \draw[-] (0.1,1.5)  -- (8,1.5);
   \draw[-] (0.1,0.5)  -- (8,0.5);
   \draw[-] (0.1,-0.5) -- (8,-0.5);
   \draw[-] (6,1.5) -- (6,-0.5);
    \node[operator] (op1) at (4.5,0.5) {$P$};
    \node[operator] (op2) at (4.5,-0.5) {$U$};
    \node[operator] (op3) at (1,-0.5) {$\mathcal{M}$};
    \node[operator] (op4) at (2.2,-0.5) {$\sigma_z\ot I$};
    \node[operator] (op5) at (3.5,-0.5) {$\mathcal{M}^{-1}$};
    \node[block] (b1) at (6,0) {$\mathrm{\mathbf{SWAP}}~$};
    \node[operator] (op6) at (3,1.5) {$H$};
    \node[operator] (op7) at (7,1.5) {$H$};

    \end{tikzpicture}
\caption{A quantum network for estimation  of $\mathcal{PT}$-asymmetry. The probability
of finding the control qubit (the top line) in state $\ket{0}$: $p_0$ depends on
the $\mathcal{PT}$ asymmetry of $\ket{\Psi}$ \cite{Ekert2002,Erik2000}, that is $p_0=(1+|\bra{\Psi}\mathcal{PT}\ket{\Psi}|^2)/2$,  where $\mathcal{T}=U\mathcal{K}$.   }
\label{fig2}
\end{figure}


{\it Conclusion.--}\label{sec:con}
To summarize, here we have developed the resource theory of a special kind of antiunitary asymmetry, namely, the
${\cal PT}$-asymmetry with ${\cal P}$ being any
self-inverse unitary and ${\cal T}$ being the time-reversal. We
have introduced the notion of ${\cal PT}$-symmetric states, ${\cal PT}$-covariant operations and ${\cal PT}$-asymmetry measures.
We give several interesting ${\cal PT}$-asymmetry measures which are induced from different distance measures. Most importantly, we have proved
the duality relations between the ${\cal PT}$-asymmetry measures, coherence and concurrence. This also gives new interpretations to quantum coherence and
entanglement, which are special ${\cal PT}$-symmetries in some sense, thus unifying two fundamental resources
of quantum world.
Furthermore, we have argued that via the embedding  quantum simulator, the ${\cal PT}$-asymmetry measures can be
calculated efficiently. Finally, we have discussed the ${\cal PT}$-symmetric dynamics and proposed several
open problems. Our findings will open up new ways of thinking about quantum coherence and entanglement from another resource theoretic
point of view, i.e., the ${\cal PT}$-asymmetry measures.
The ${\cal PT}$-asymmetry is a just special kind of antiunitary asymmetry resource theory and
may pave the way to a general antiunitary asymmetry resource theory.

\smallskip
\noindent
\begin{acknowledgments}
K. B. thanks Yaobo Zhang for useful discussion.
This work is supported by the Natural Science Foundations of China (Grants No:11171301, No: 10771191 and No: 11571307) and
the Doctoral Programs Foundation of the Ministry of Education of China (Grant No. J20130061).

\end{acknowledgments}

\bibliographystyle{apsrev4-1}
 \bibliography{pt-sym-lit}

\appendix

\section{Relative entropy of $\mathcal{PT}$-asymmetry}\label{ap:Re_pt}
To prove the properties of $\Gamma_s$, we need the following lemma, which
is  not trivial, as $\mathcal{PT}$ is antilinear operator.
\begin{lem}\label{lem:1}
Given the self-inverse unitary operator $\mathcal{P}$ and time reversal operator $\mathcal{T}$,
for any two Hermitian operators Q and S, we have
\begin{eqnarray}\label{eq:eq1}
\Tr{Q\mathcal{PT}S\mathcal{PT}}=\Tr{\mathcal{PT}Q\mathcal{PT} S}.
\end{eqnarray}
\end{lem}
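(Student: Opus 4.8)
The plan is to strip the antilinearity out of both sides and then reduce the claim to an elementary trace identity for linear operators. By the remark just above the lemma, $\mathcal{PT}$ is a conjugation, so I may write $\mathcal{PT}=V\mathcal{K}$ with $V$ unitary and $V=V^t$; hence $\overline{V}=V^\dag$, where the bar denotes entrywise complex conjugation in the basis defining $\mathcal{K}$. (Equivalently, one keeps $\mathcal{PT}=\mathcal{P}U\mathcal{K}$ and notes that $U=U^t$ together with $[\mathcal{P},\mathcal{T}]=0$ are exactly what make $V:=\mathcal{P}U$ a symmetric unitary, so the two routes coincide.) The first step, and really the only subtle one, is the observation that for any operator $X$ the antilinear sandwich is in fact a linear map:
\begin{eqnarray*}
\mathcal{PT}\,X\,\mathcal{PT}=V\mathcal{K}\,X\,V\mathcal{K}=V\,\overline{X}\,\overline{V}\,\mathcal{K}^2=V\,\overline{X}\,V^\dag,
\end{eqnarray*}
using $\mathcal{K}A=\overline{A}\,\mathcal{K}$ and $\mathcal{K}^2=I$. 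In particular $V\overline{X}V^\dag$ is Hermitian whenever $X$ is.

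Granting this, the two sides of the asserted identity become ordinary traces of products of linear operators, namely $\Tr{Q\,V\overline{S}V^\dag}$ on the left and $\Tr{V\overline{Q}V^\dag\,S}$ on the right. To match them I would apply transpose-invariance of the trace, $\Tr{A}=\Tr{A^t}$, to the left-hand expression. Using $(V^\dag)^t=\overline{V}$, $V^t=V$, and the fact that a Hermitian operator $A$ satisfies $A^t=\overline{A}$ (so that $(\overline{S})^t=S$ and $Q^t=\overline{Q}$), this yields $\Tr{Q V\overline{S}V^\dag}=\Tr{\overline{V}\,S\,V\,\overline{Q}}=\Tr{V^\dag S V\,\overline{Q}}$, the last step by $\overline{V}=V^\dag$. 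A single use of cyclicity of the trace rewrites $\Tr{V^\dag S V\,\overline{Q}}$ as $\Tr{V\overline{Q}V^\dag S}$, which is exactly the right-hand side; this proves the lemma.

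The point to be careful about is precisely the collapse of the antilinear sandwich: while $\mathcal{PT}$ is antilinear one is not permitted to use cyclicity of the trace, the transpose identity, or adjoint identities on any expression still containing $\mathcal{K}$, so the complex-conjugation operators must be commuted through and cancelled first. Everything after that is bookkeeping of transpose versus complex conjugate versus adjoint — which is why Hermiticity of $Q$ and $S$ (turning transpose into complex conjugate) and the symmetry $V=V^t$ (turning complex conjugate into adjoint) are exactly the hypotheses that force the two sides to agree.
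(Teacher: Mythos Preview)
Your argument is correct. You collapse the antilinear sandwich to the linear map $X\mapsto V\overline{X}V^\dag$ (with $V=\mathcal{P}U$ symmetric unitary), then equate the two traces by a single application of $\Tr{A}=\Tr{A^t}$ plus cyclicity, using that Hermiticity converts transpose to complex conjugate and $V^t=V$ converts complex conjugate to adjoint. Every step checks out.

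The paper's proof is organized differently: it spectrally decomposes $Q$ and $S$, reduces the trace identity to a pointwise statement about matrix elements, $|\bra{\psi}V\ket{\phi^*}|=|\bra{\phi}V\ket{\psi^*}|$, and then establishes $V^t=V$ from $[\mathcal{P},\mathcal{T}]=0$ inside the proof to finish. Both arguments hinge on the same structural fact ($V^t=V$), but yours handles the operators globally and avoids the spectral decomposition altogether, which makes it shorter and arguably cleaner. The paper's route has the minor advantage of making explicit, via the matrix-element identity, exactly why antilinearity does not obstruct the equality; yours achieves the same by isolating the ``collapse'' step and flagging that trace manipulations are only legitimate once $\mathcal{K}$ has been eliminated. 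Either way the content is the same.
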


\begin{proof}
Due to the spectral decomposition
Theorem, Q and S can be written as
$Q=\sum_i\lambda_i\proj{\psi_i}$ and
$S=\sum_j\mu_j \proj{\phi_j}
$, respectively.
Thus, we have
\begin{eqnarray*}
&&\Tr{Q\mathcal{PT}S\mathcal{PT}}\\
&=&\sum_{ij}\lambda_i\mu_j\Tr{\proj{\psi_i}\mathcal{P}U\mathcal{K}\proj{\phi_j}\mathcal{K}U^\dag \mathcal{P}}\\
&=&\sum_{ij}\lambda_i\mu_j|\bra{\psi_i}\mathcal{P}U\ket{\phi^*_j}|^2,
\end{eqnarray*}
where the first equality comes from the fact that
$\mathcal{T}=U\mathcal{K}$, where $U$ is a unitary operator with $U=U^t$ and $\mathcal{K}=*$.
Similarly, we have
\begin{eqnarray*}
\Tr{\mathcal{PT}Q\mathcal{PT}S}
=\sum_{ij}\lambda_i\mu_j|\bra{\phi_j}\mathcal{P}U\ket{\psi^*_i}|^2.
\end{eqnarray*}
Hence, to prove $\Tr{Q\mathcal{PT}S\mathcal{PT}} =\Tr{\mathcal{PT}Q\mathcal{PT}S}$,  we only need to prove
that for any two pure states $\ket{\psi}$ and $\ket{\phi}$
\begin{eqnarray}
|\bra{\psi}V\ket{\phi^*}|=|\bra{\phi}V\ket{\psi^*}|,
\end{eqnarray}
where $V=\mathcal{P}U$ is a unitary operator.
Moreover,
since $\mathcal{P}=\mathcal{TPT}=U\mathcal{K}\mathcal{P}U\mathcal{K}=U\mathcal{P}^*U^*$,
then $P^*U^*=U^\dag \mathcal{P}=U^\dag \mathcal{P}^\dag$, which
implies that $U^t\mathcal{P}^t=\mathcal{P}U$. That is,
$V^t=V$. Therefore,
$
|\bra{\psi}V\ket{\phi^*}|
=|\bra{\phi^*}V^\dag \ket{\psi}|
=|\bra{\phi}V\ket{\psi^*}|
$

\end{proof}

\begin{prop}\label{prop:re1}
Given the self-inverse unitary operator $\mathcal{P}$ and time reversal operator $\mathcal{T}$,
let $\Gamma_r(\rho, \mathcal{PT})=\min_{\sigma\in Sym(\mathcal{P},\mathcal{T})}S(\rho||\sigma)$, then we have
\begin{eqnarray}\label{eq:re1}
\Gamma_r(\rho, \mathcal{PT})=S(\rho||\rho^{\mathcal{PT}})=S(\rho^{\mathcal{PT}})-S(\rho),
\end{eqnarray}
where $\rho^{\mathcal{PT}}=\frac{1}{2}(\rho+\mathcal{PT}\rho \mathcal{PT})$ is $\mathcal{PT}$-symmetric.
\end{prop}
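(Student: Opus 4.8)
The plan is to mimic the standard computation of the relative entropy of coherence, with Lemma~\ref{lem:1} taking over the role that cyclicity of the trace plays in the linear setting. First I would record two elementary facts. (i) $\rho^{\mathcal{PT}}=\frac{1}{2}(\rho+\mathcal{PT}\rho\mathcal{PT})$ is $\mathcal{PT}$-symmetric: since $(\mathcal{PT})^2=I$ we get $\mathcal{PT}\rho^{\mathcal{PT}}\mathcal{PT}=\frac{1}{2}(\mathcal{PT}\rho\mathcal{PT}+\rho)=\rho^{\mathcal{PT}}$, and $\mathrm{supp}(\rho)\subseteq\mathrm{supp}(\rho^{\mathcal{PT}})$, so $S(\rho||\rho^{\mathcal{PT}})$ is finite. (ii) For a real-valued function $f$ (in particular $\log$) and any Hermitian $X$, $\mathcal{PT}f(X)\mathcal{PT}=f(\mathcal{PT}X\mathcal{PT})$: writing $X=\sum_i\lambda_i P_i$ spectrally, the $\lambda_i$ are real so antilinearity does not conjugate them, and $\{\mathcal{PT}P_i\mathcal{PT}\}$ is again a complete family of mutually orthogonal projectors (using $(\mathcal{PT})^2=I$ and $(\mathcal{PT})^\dag=\mathcal{PT}$), whence $\mathcal{PT}X\mathcal{PT}=\sum_i\lambda_i\,\mathcal{PT}P_i\mathcal{PT}$ is a spectral decomposition and applying $f$ commutes with conjugation by $\mathcal{PT}$. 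In particular, $\log$ of a $\mathcal{PT}$-symmetric state is again $\mathcal{PT}$-symmetric.

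The key step is the identity $\Tr{\rho\log\sigma}=\Tr{\rho^{\mathcal{PT}}\log\sigma}$, valid for every $\mathcal{PT}$-symmetric $\sigma$ whose support contains that of $\rho$. To prove it, apply Lemma~\ref{lem:1} to the two Hermitian operators $Q=\log\sigma$ and $S=\rho$: this gives $\Tr{\log\sigma\,\mathcal{PT}\rho\mathcal{PT}}=\Tr{\mathcal{PT}\log\sigma\mathcal{PT}\,\rho}=\Tr{\log\sigma\,\rho}$, where the last step uses fact (ii) together with $\mathcal{PT}\sigma\mathcal{PT}=\sigma$. Averaging the two sides, $\Tr{\rho^{\mathcal{PT}}\log\sigma}=\frac{1}{2}\Tr{\rho\log\sigma}+\frac{1}{2}\Tr{\mathcal{PT}\rho\mathcal{PT}\log\sigma}=\Tr{\rho\log\sigma}$. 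Taking $\sigma=\rho^{\mathcal{PT}}$ (legitimate by fact (i)) gives $\Tr{\rho\log\rho^{\mathcal{PT}}}=\Tr{\rho^{\mathcal{PT}}\log\rho^{\mathcal{PT}}}=-S(\rho^{\mathcal{PT}})$, hence $S(\rho||\rho^{\mathcal{PT}})=-S(\rho)-\Tr{\rho\log\rho^{\mathcal{PT}}}=S(\rho^{\mathcal{PT}})-S(\rho)$, which is the claimed closed form.

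It remains to identify this value with the minimum. For any $\sigma\in Sym(\mathcal{P},\mathcal{T})$ with $\mathrm{supp}(\rho)\subseteq\mathrm{supp}(\sigma)$ (otherwise $S(\rho||\sigma)=+\infty$ and there is nothing to prove), applying the key identity once with $\sigma=\rho^{\mathcal{PT}}$ and once with the given $\sigma$ reduces $S(\rho||\sigma)-S(\rho||\rho^{\mathcal{PT}})$ to $\Tr{\rho^{\mathcal{PT}}\log\rho^{\mathcal{PT}}}-\Tr{\rho^{\mathcal{PT}}\log\sigma}=S(\rho^{\mathcal{PT}}||\sigma)\geq0$. Since $\rho^{\mathcal{PT}}$ itself lies in $Sym(\mathcal{P},\mathcal{T})$, the minimum is attained there, giving \eqref{eq:re1}. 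I expect the only genuine obstacle to be the bookkeeping forced by the antilinearity of $\mathcal{PT}$: establishing that $\mathcal{PT}$ commutes with the functional calculus of Hermitian operators (fact (ii)) and invoking Lemma~\ref{lem:1} in place of the usual trace manipulations. Everything else, including the support and finiteness caveats, is routine. An alternative route would be to realize $\rho\mapsto\rho^{\mathcal{PT}}$ as a positive, trace-preserving ``twirl'' and appeal to monotonicity of the relative entropy, but that monotonicity is itself delicate for maps that are merely positive rather than completely positive, so the direct argument above seems preferable.
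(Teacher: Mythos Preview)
Your proof is correct and essentially coincides with the paper's: both use Lemma~\ref{lem:1} to establish $\Tr{\rho\log\sigma}=\Tr{\rho^{\mathcal{PT}}\log\sigma}$ for $\mathcal{PT}$-symmetric $\sigma$, deduce the closed form $S(\rho^{\mathcal{PT}})-S(\rho)$, and then identify the minimum (the paper phrases this last step as $\max_{\sigma\in Sym(\mathcal{P},\mathcal{T})}\Tr{\rho^{\mathcal{PT}}\log\sigma}=-S(\rho^{\mathcal{PT}})$, you via the equivalent Pythagorean identity $S(\rho\|\sigma)-S(\rho\|\rho^{\mathcal{PT}})=S(\rho^{\mathcal{PT}}\|\sigma)\geq0$). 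Your explicit verification that $\mathcal{PT}$ commutes with the functional calculus of Hermitian operators and your attention to support/finiteness are welcome details that the paper leaves implicit.
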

\begin{proof}
Since it involves the complex conjugation $\mathcal{K}$, taking trace may be
complicated. As  for any linear operator A, $\mathcal{K}A\mathcal{K}=A^*$ and $\Tr{A^*}\neq \Tr{A}$
in general, thus we need be more careful to deal with taking trace here.
However, due to Lemma \ref{lem:1},  for any two Hermitian operators Q and S, we have
\begin{eqnarray}\label{eq:eq1}
\Tr{Q\mathcal{PT}S\mathcal{PT}}=\Tr{\mathcal{PT}Q\mathcal{PT} S}.
\end{eqnarray}
Then we follow the approach in \cite{Gour2009,Marvian2012} to complete the
proof. Due to the fact that $S(\rho||\sigma)\geq 0$ and
$S(\rho||\sigma)=0$ iff $\sigma=\rho$, thus
\begin{eqnarray}\label{ineq:ine1}
\Tr{\rho\log\sigma}\leq \Tr{\rho\log \rho}
\end{eqnarray}
wehre the equality holds iff $\sigma=\rho$, which
implies that $\max_{\sigma}\Tr{\rho\log\sigma}=\Tr{\rho\log\rho}$.

First, for any $\mathcal{PT}$-symmetric state $\sigma$,  we have
\begin{eqnarray*}
\Tr{\mathcal{PT}\rho \mathcal{PT} \log \sigma}
=\Tr{\rho \mathcal{PT}\log\sigma \mathcal{PT}}
=\Tr{\rho \log\sigma},
\end{eqnarray*}
where the first equality comes from  the  equation \eqref{eq:eq1} and
the second equality comes from the fact that $\sigma$ is $\mathcal{PT}$-symmetric.
Hence, we have
\begin{eqnarray}\label{eq:eq2}
&&\Tr{\rho\log\sigma}\nonumber\\
&=&\frac{1}{2}\Tr{\rho\log\sigma}+\frac{1}{2}\Tr{\mathcal{PT}\rho \mathcal{PT}\log\sigma}\nonumber\\
&=&\Tr{\rho^{\mathcal{PT}}\log\sigma},
\end{eqnarray}
where $\rho^{\mathcal{PT}}=\frac{1}{2}(\rho+\mathcal{PT}\rho \mathcal{PT})$. Thus, for the  $\mathcal{PT}$-symmetric state $\rho^\mathcal{{PT}}$,
$S(\rho||\rho^{\mathcal{PT}})=\Tr{\rho\log\rho}-\Tr{\rho\log\rho^{\mathcal{PT}}}=\Tr{\rho\log\rho}-\Tr{\rho^{\mathcal{PT}}\log\rho^{\mathcal{PT}}}
=S(\rho^{\mathcal{PT}})-S(\rho)$.

Next, we show that $\min_{\sigma\in Sym(\mathcal{P},\mathcal{T})}S(\rho||\sigma)=S(\rho||\rho^{\mathcal{PT}})$.

\begin{eqnarray*}
&&\min_{\sigma\in Sym(\mathcal{P},\mathcal{T})}S(\rho||\sigma)\\
&=&\min_{\sigma\in Sym(\mathcal{P},\mathcal{T})}[\Tr{\rho\log\rho}-\Tr{\rho\log\sigma}]\\
&=&\Tr{\rho\log\rho}-\max_{\sigma\in Sym(\mathcal{P},\mathcal{T})}\Tr{\rho\log\sigma}\\
&=&\Tr{\rho\log\rho}-\max_{\sigma\in Sym(\mathcal{P},\mathcal{T})}\Tr{\rho^{\mathcal{PT}}\log \sigma}\\
&=&\Tr{\rho\log\rho}-\Tr{\rho^{\mathcal{PT}}\log \rho^{\mathcal{PT}}}\\
&=&S(\rho^{\mathcal{PT}})-S(\rho),
\end{eqnarray*}
where the third equality comes from \eqref{eq:eq2} and the fourth equality comes from
the  inequality \eqref{ineq:ine1}. Therefore, the proof of the proposition is completed.

\end{proof}

\begin{prop}
Given the self-inverse unitary operator $\mathcal{P}$ and time reversal operator $\mathcal{T}$, then
$\Gamma_r$ satisfy the conditions (C1), (C2), (C2') and (C3), so it is
a proper $\mathcal{PT}$-asymmetry measure.
\end{prop}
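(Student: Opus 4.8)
The plan is to reduce the whole statement to three textbook properties of the quantum relative entropy, since the variational form $\Gamma_r(\rho,\mathcal{PT})=\min_{\sigma\in Sym(\mathcal{P},\mathcal{T})}S(\rho||\sigma)=S(\rho||\rho^{\mathcal{PT}})$ from Proposition~\ref{prop:re1} already absorbs all the antilinearity. The facts I will use are: $S(\rho||\sigma)\geq 0$ with equality iff $\rho=\sigma$; joint convexity of $S(\cdot||\cdot)$; and the data-processing inequality $S(\Lambda(\rho)||\Lambda(\sigma))\leq S(\rho||\sigma)$ for every CPTP map $\Lambda$. Two elementary structural facts about $Sym(\mathcal{P},\mathcal{T})$ supply the rest. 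First, $Sym(\mathcal{P},\mathcal{T})$ is convex, because $\rho\mapsto\mathcal{PT}\rho\mathcal{PT}=\mathcal{P}U\rho^{*}U^{\dagger}\mathcal{P}$ is $\mathbb{R}$-linear on Hermitian operators, so a convex combination of fixed points is again a fixed point. Second, $\mathcal{PT}$-covariant operations preserve $Sym(\mathcal{P},\mathcal{T})$: if $[\Phi,\mathcal{PT}]=0$ and $\mathcal{PT}\sigma\mathcal{PT}=\sigma$ then $\mathcal{PT}\Phi(\sigma)\mathcal{PT}=\Phi(\mathcal{PT}\sigma\mathcal{PT})=\Phi(\sigma)$; and if $\{K_{\mu}\}$ obeys the selective $\mathcal{PT}$-covariance condition then for $\mathcal{PT}$-symmetric $\sigma$ one has $K_{\mu}\sigma K_{\mu}^{\dagger}=K_{\mu}(\mathcal{PT}\sigma\mathcal{PT})K_{\mu}^{\dagger}=\mathcal{PT}K_{\mu}\sigma K_{\mu}^{\dagger}\mathcal{PT}$, and since $\mathrm{Tr}(\mathcal{PT}X\mathcal{PT})=\mathrm{Tr}(X)$ for Hermitian $X$, the normalized operator $K_{\mu}\sigma K_{\mu}^{\dagger}/\mathrm{Tr}(K_{\mu}\sigma K_{\mu}^{\dagger})$ is again $\mathcal{PT}$-symmetric.

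For \emph{(C1)}: by nonnegativity and the equality case, $\Gamma_r(\rho,\mathcal{PT})=S(\rho||\rho^{\mathcal{PT}})=0$ iff $\rho=\rho^{\mathcal{PT}}=\tfrac12(\rho+\mathcal{PT}\rho\mathcal{PT})$, i.e. iff $\mathcal{PT}\rho\mathcal{PT}=\rho$; multiplying on the right by $\mathcal{PT}$ and using $(\mathcal{PT})^{2}=I$ gives $\mathcal{PT}\rho=\rho\mathcal{PT}$, i.e. $[\rho,\mathcal{PT}]=0$. For \emph{(C3)}: pick $\sigma_{n}\in Sym(\mathcal{P},\mathcal{T})$ with $S(\rho_{n}||\sigma_{n})=\Gamma_r(\rho_{n},\mathcal{PT})$; then $\sum_{n}p_{n}\sigma_{n}\in Sym(\mathcal{P},\mathcal{T})$ by convexity, so by the variational form and joint convexity, $\Gamma_r(\sum_{n}p_{n}\rho_{n},\mathcal{PT})\leq S(\sum_{n}p_{n}\rho_{n}||\sum_{n}p_{n}\sigma_{n})\leq\sum_{n}p_{n}S(\rho_{n}||\sigma_{n})=\sum_{n}p_{n}\Gamma_r(\rho_{n},\mathcal{PT})$. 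For \emph{(C2)}: let $\sigma$ attain $\Gamma_r(\rho,\mathcal{PT})$; since $\Phi_{\mathcal{PT}_{CO}}(\sigma)\in Sym(\mathcal{P},\mathcal{T})$, data processing gives $\Gamma_r(\Phi_{\mathcal{PT}_{CO}}(\rho),\mathcal{PT})\leq S(\Phi_{\mathcal{PT}_{CO}}(\rho)||\Phi_{\mathcal{PT}_{CO}}(\sigma))\leq S(\rho||\sigma)=\Gamma_r(\rho,\mathcal{PT})$.

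The condition \emph{(C2')} is the one I expect to be the main obstacle, because strong monotonicity is genuinely stronger than (C2)$+$(C3): combining those two only bounds $\Gamma_r$ of the post-selection average, not the average of $\Gamma_r$. The standard device is the classical-register embedding. With $\rho_{\mu}=K_{\mu}\rho K_{\mu}^{\dagger}/p_{\mu}$, $p_{\mu}=\mathrm{Tr}(K_{\mu}\rho K_{\mu}^{\dagger})$, $\sum_{\mu}K_{\mu}^{\dagger}K_{\mu}=I$, and $\sigma$ attaining $\Gamma_r(\rho,\mathcal{PT})$, set $\sigma_{\mu}=K_{\mu}\sigma K_{\mu}^{\dagger}/q_{\mu}$ with $q_{\mu}=\mathrm{Tr}(K_{\mu}\sigma K_{\mu}^{\dagger})$, which lies in $Sym(\mathcal{P},\mathcal{T})$ by the structural fact above. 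Introduce the CPTP map $\Lambda(X)=\sum_{\mu}K_{\mu}XK_{\mu}^{\dagger}\otimes\proj{\mu}$ with orthonormal flags $\{\ket{\mu}\}$. Then $\Lambda(\rho)=\sum_{\mu}p_{\mu}\rho_{\mu}\otimes\proj{\mu}$ and $\Lambda(\sigma)=\sum_{\mu}q_{\mu}\sigma_{\mu}\otimes\proj{\mu}$ are block diagonal, and a one-line computation on each block gives
\[ S(\Lambda(\rho)||\Lambda(\sigma))=\sum_{\mu}p_{\mu}\log\frac{p_{\mu}}{q_{\mu}}+\sum_{\mu}p_{\mu}S(\rho_{\mu}||\sigma_{\mu}). \]
The first sum is the classical relative entropy $D(\{p_{\mu}\}||\{q_{\mu}\})\geq 0$, and by data processing $S(\Lambda(\rho)||\Lambda(\sigma))\leq S(\rho||\sigma)=\Gamma_r(\rho,\mathcal{PT})$, so $\sum_{\mu}p_{\mu}S(\rho_{\mu}||\sigma_{\mu})\leq\Gamma_r(\rho,\mathcal{PT})$. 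Finally, $\sigma_{\mu}\in Sym(\mathcal{P},\mathcal{T})$ forces $\Gamma_r(\rho_{\mu},\mathcal{PT})\leq S(\rho_{\mu}||\sigma_{\mu})$, and summing yields $\sum_{\mu}p_{\mu}\Gamma_r(\rho_{\mu},\mathcal{PT})\leq\Gamma_r(\rho,\mathcal{PT})$, which is (C2'). Beyond routine bookkeeping, the only points needing care are the trace identity $\mathrm{Tr}(\mathcal{PT}X\mathcal{PT})=\mathrm{Tr}(X)$ and the relation $(\mathcal{PT})^{2}=I$, both consequences of the standing assumptions on $\mathcal{P}$ and $\mathcal{T}$; note also that with the optimal choice $\sigma=\rho^{\mathcal{PT}}$ one has $\mathrm{supp}(\rho)\subseteq\mathrm{supp}(\rho^{\mathcal{PT}})$, so $p_{\mu}>0$ implies $q_{\mu}>0$ and all expressions above are finite.
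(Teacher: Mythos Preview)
Your proof is correct. For (C1), (C2), (C3) you do exactly what the paper does---invoke faithfulness, data processing, and joint convexity of relative entropy together with the elementary observation that $Sym(\mathcal{P},\mathcal{T})$ is convex and stable under $\mathcal{PT}$-covariant maps---only with more detail spelled out.

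For (C2') you and the paper both use the flagged instrument $\rho\mapsto\sum_\mu K_\mu\rho K_\mu^\dagger\otimes\proj{\mu}$, but you finish differently. The paper equips the flag register with its own antiunitary $\mathcal{P}_0\mathcal{T}_0=\mathcal{K}$, notes that the flagged map is again $\mathcal{PT}$-covariant (now for $\mathcal{P}_0\mathcal{T}_0\otimes\mathcal{PT}$), applies the already-proved (C2) to obtain $\Gamma_r\big(\sum_\mu p_\mu\proj{\mu}\otimes\rho_\mu,\mathcal{P}_0\mathcal{T}_0\otimes\mathcal{PT}\big)\le\Gamma_r(\rho,\mathcal{PT})$, and then uses the closed form $\Gamma_r=S(\rho^{\mathcal{PT}})-S(\rho)$ together with the block entropy identity to evaluate the left side as $\sum_\mu p_\mu\Gamma_r(\rho_\mu,\mathcal{PT})$. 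You instead stay entirely within the variational definition: you apply data processing for $S(\cdot\|\cdot)$ directly to the flagged CPTP map, split $S(\Lambda(\rho)\|\Lambda(\sigma))$ into the classical part $D(\{p_\mu\}\|\{q_\mu\})\ge 0$ plus $\sum_\mu p_\mu S(\rho_\mu\|\sigma_\mu)$, and then observe that selective $\mathcal{PT}$-covariance forces each $\sigma_\mu\in Sym(\mathcal{P},\mathcal{T})$, so $S(\rho_\mu\|\sigma_\mu)\ge\Gamma_r(\rho_\mu,\mathcal{PT})$. Your route avoids having to place a $\mathcal{PT}$-structure on the ancilla and never invokes the closed form; it is the standard Vedral--Plenio argument for any relative-entropy-of-resource measure and would transfer verbatim to other contractive divergences. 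The paper's route, by contrast, buys a cleaner ``(C2) $\Rightarrow$ (C2')'' reduction once the closed form is in hand, and the same template is reused for $\Gamma_s$ and $\Gamma_F$ in the later appendices.
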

\begin{proof}
Since $S(\rho||\sigma)=0$ iff $\rho=\sigma$, thus $\Gamma_r$ satisfy (C1).
Besides, as the relative entropy is contracted under quantum operations \cite{Lindblad1975} and
jointly convex \cite{Ruskai2002}, then $\Gamma_r$ satisfy the conditions (C2) and (C3). Moreover,
we use the  the techniques in Ref.\cite{Marvian2012,Baumgratz2014} to prove that $\Gamma_r$ satisfy the condition (C2').

Take a special self-inverse unitary operator $\mathcal{P}_0=I$ and a time reversal operator
$\mathcal{T}_0=*$, then it is easy to see that  there exist a set
of orthonormal  pure states $\set{\ket{\mu}}_\mu$  with $\proj{\mu}\in Sym(\mathcal{P}_0, \mathcal{T}_0)$.
For any selective $\mathcal{PT}$-covariant operation $\Phi$ with $K_\mu(\mathcal{PT}(\cdot)\mathcal{PT})K^\dag_\mu=\mathcal{PT}K_\mu(\cdot)K^\dag_\mu \mathcal{PT}$ for any $\mu$,
it is easy to verify that the quantum operations $\widetilde{\Phi}$ with
Kraus operators $\widetilde{K}_\mu=\ket{\mu}\ot K_\mu$ is selective
$\mathcal{PT}$-covariant with respect to $(\mathcal{P},\mathcal{T})$ and $(\mathcal{P}_0\ot \mathcal{P}, \mathcal{T}_0\ot \mathcal{T})$, that is, $\widetilde{K}_\mu \mathcal{PT}\rho \mathcal{PT} \widetilde{K}^\dag_\mu=\mathcal{PT}\widetilde{K}_\mu \rho \widetilde{K}^\dag_\mu \mathcal{PT} $.
Thus $\widetilde{\Phi}(\rho)=\sum_\mu p_{\mu}
\proj{\mu}\ot \rho_\mu$, where $\rho_\mu=K_\mu\rho K^\dag_\mu/p_\mu$ with $p_\mu=\mathrm{Tr}(K_\mu \rho K^\dag_\mu)$.
As we have proved that $\Gamma_r$ satisfies the condition (C2), which implies that
$\Gamma_r(\sum_\mu p_{\mu}
\proj{\mu}\ot \rho_\mu,\mathcal{P}_0\mathcal{T}_0\ot \mathcal{PT})\leq \Gamma_r(\rho, \mathcal{PT})$.
Moreover,
\begin{eqnarray}
&&\Gamma_r(\sum_\mu p_{\mu}\proj{\mu}\ot \rho_\mu,\mathcal{P}_0\mathcal{T}_0\ot \mathcal{PT}\nonumber)\\
&=&S(\sum_\mu p_{\mu}\proj{\mu}\ot \rho^{\mathcal{PT}}_\mu)-S(\sum_\mu p_{\mu}\proj{\mu}\ot \rho_\mu)\nonumber\\
&=&\sum_\mu p_\mu S(\rho^{\mathcal{PT}}_\mu)-\sum_\mu p_{\mu}S(\rho_{\mu})\nonumber\\
&=&\sum_\mu p_\mu \Gamma_r(\rho_\mu,\mathcal{PT}).
\end{eqnarray}
where the first equality comes from the Proposition \ref{prop:re1} and
the second equality comes from the following  fact:
\begin{eqnarray}
S(\sum_ip_i\proj{i}\ot\rho_i)=\sum_ip_iS(\rho_i)+H(\set{p_i}),
\end{eqnarray}
where $\set{\ket{i}}$ is a set of othornormal pure sates and $H(\set{p_i})=
\sum_i-p_i\log p_i$ is the Shannon entropy of the probability distribution $\set{p_i}$.
Therefore, $\Gamma_r$ satisfies the condition $(C2')$, that is,  $\Gamma_r$
is a proper $\mathcal{PT}$-asymmetry measure.

\end{proof}

\begin{lem}
The measure $\Gamma_r$ is additive, that is
\begin{eqnarray}
\Gamma_r(\rho\ot \sigma, \mathcal{PT}\ot \mathcal{P}'\mathcal{T}')
=\Gamma_r(\rho, \mathcal{PT})+\Gamma_r(\sigma, \mathcal{P}'\mathcal{T}') \nonumber\\
\end{eqnarray}
\end{lem}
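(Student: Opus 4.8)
The plan is to pass immediately to the closed-form expression for $\Gamma_r$ proved in Proposition~\ref{prop:re1}, namely $\Gamma_r(\rho,\mathcal{PT})=S(\rho^{\mathcal{PT}})-S(\rho)$, and then to exploit additivity of the von~Neumann entropy on tensor products. Writing $\Theta=\mathcal{PT}\otimes\mathcal{P}'\mathcal{T}'$, the assertion to prove becomes
\begin{eqnarray}
S\big((\rho\otimes\sigma)^{\Theta}\big)-S(\rho\otimes\sigma)=\big[S(\rho^{\mathcal{PT}})-S(\rho)\big]+\big[S(\sigma^{\mathcal{P}'\mathcal{T}'})-S(\sigma)\big].\nonumber
\end{eqnarray}
Since $S(\rho\otimes\sigma)=S(\rho)+S(\sigma)$, this reduces to the single claim $S\big((\rho\otimes\sigma)^{\Theta}\big)=S(\rho^{\mathcal{PT}})+S(\sigma^{\mathcal{P}'\mathcal{T}'})$, and for that it suffices to show the $\mathcal{PT}$-symmetrization factorizes through the tensor product, i.e. $(\rho\otimes\sigma)^{\Theta}=\rho^{\mathcal{PT}}\otimes\sigma^{\mathcal{P}'\mathcal{T}'}$; one then invokes additivity of the entropy once more. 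So the whole proof hinges on one identity about how the symmetrization interacts with tensor products.

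To attack that identity I would first rewrite the joint antiunitary operator in conjugation form. With $\mathcal{T}=U\mathcal{K}$, $\mathcal{T}'=U'\mathcal{K}$, and using that $\mathcal{K}\otimes\mathcal{K}$ is complex conjugation $\mathcal{K}_{AB}$ with respect to the product basis, one gets $\Theta=(\mathcal{P}U\otimes\mathcal{P}'U')\mathcal{K}_{AB}=V\mathcal{K}_{AB}$ with $V=\mathcal{P}U\otimes\mathcal{P}'U'$ unitary; moreover $V^{t}=V$ (using $(\mathcal{P}U)^{t}=\mathcal{P}U$ and $(\mathcal{P}'U')^{t}=\mathcal{P}'U'$ exactly as in the proof of Lemma~\ref{lem:1}), so $\Theta$ is again a conjugation and the closed form of Proposition~\ref{prop:re1} applies to $\Gamma_r(\rho\otimes\sigma,\Theta)$. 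Then $\Theta(\rho\otimes\sigma)\Theta=V(\rho\otimes\sigma)^{*}V^{\dag}$ is a genuine \emph{linear} channel applied to $\rho\otimes\sigma$, and since both $\mathcal{K}_{AB}$ and the unitary $V$ respect the bipartition, $\Theta(\rho\otimes\sigma)\Theta=(\mathcal{PT}\rho\,\mathcal{PT})\otimes(\mathcal{P}'\mathcal{T}'\sigma\,\mathcal{P}'\mathcal{T}')$. Substituting this into $(\rho\otimes\sigma)^{\Theta}=\tfrac12\big(\rho\otimes\sigma+\Theta(\rho\otimes\sigma)\Theta\big)$ and comparing against $\rho^{\mathcal{PT}}\otimes\sigma^{\mathcal{P}'\mathcal{T}'}=\tfrac14(\rho+\mathcal{PT}\rho\,\mathcal{PT})\otimes(\sigma+\mathcal{P}'\mathcal{T}'\sigma\,\mathcal{P}'\mathcal{T}')$ is the step that has to be handled with care, leaning on Lemma~\ref{lem:1} and on the defining relations $\mathcal{P}^{2}=I$, $[\mathcal{P},\mathcal{T}]=0$ (and their primed analogues) to control the cross terms.

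The main obstacle is exactly this factorization of the symmetrized state: antilinear operators do not tensor in the naive way, so the identity cannot be obtained by manipulating $\mathcal{PT}$ and $\mathcal{P}'\mathcal{T}'$ on the two factors independently — one really has to argue at the level of the single conjugation $V\mathcal{K}_{AB}$ and of the linear channel $X\mapsto VX^{*}V^{\dag}$ it induces, being attentive to the mixed terms in the expansion above. As a partial consistency check (the ``$\le$'' direction), note that if $\tau_{1}\in Sym(\mathcal{P},\mathcal{T})$ and $\tau_{2}\in Sym(\mathcal{P}',\mathcal{T}')$ then $\tau_{1}\otimes\tau_{2}$ is $\Theta$-symmetric, so from the variational definition $\Gamma_{r}(\rho\otimes\sigma,\Theta)\le S(\rho\otimes\sigma\,\|\,\rho^{\mathcal{PT}}\otimes\sigma^{\mathcal{P}'\mathcal{T}'})=S(\rho\|\rho^{\mathcal{PT}})+S(\sigma\|\sigma^{\mathcal{P}'\mathcal{T}'})=\Gamma_{r}(\rho,\mathcal{PT})+\Gamma_{r}(\sigma,\mathcal{P}'\mathcal{T}')$; the matching lower bound is what the factorization identity (or, failing that, a monotonicity argument under the partial-trace channels, once they are checked to intertwine the product and single-system $\mathcal{PT}$-structures) is meant to supply.
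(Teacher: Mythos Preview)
Your route --- invoke the closed form $\Gamma_r(\rho,\mathcal{PT})=S(\rho^{\mathcal{PT}})-S(\rho)$ from Proposition~\ref{prop:re1} and then appeal to additivity of the von~Neumann entropy --- is precisely the paper's argument; the paper's entire proof is a one-line reference to that representation. The difference is that you actually try to justify the implicit step, namely $(\rho\otimes\sigma)^{\Theta}=\rho^{\mathcal{PT}}\otimes\sigma^{\mathcal{P}'\mathcal{T}'}$, and you are right to call it the ``main obstacle.''

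That factorization, however, is \emph{false}, and with it the lemma as stated. Writing $\tilde\rho=\mathcal{PT}\rho\,\mathcal{PT}$ and $\tilde\sigma=\mathcal{P}'\mathcal{T}'\sigma\,\mathcal{P}'\mathcal{T}'$, you have already computed
\[
(\rho\otimes\sigma)^{\Theta}=\tfrac12\big(\rho\otimes\sigma+\tilde\rho\otimes\tilde\sigma\big),
\qquad
\rho^{\mathcal{PT}}\otimes\sigma^{\mathcal{P}'\mathcal{T}'}=\tfrac14\big(\rho\otimes\sigma+\rho\otimes\tilde\sigma+\tilde\rho\otimes\sigma+\tilde\rho\otimes\tilde\sigma\big),
\]
and the cross terms $\rho\otimes\tilde\sigma,\ \tilde\rho\otimes\sigma$ do not disappear; Lemma~\ref{lem:1} concerns traces, not operator identities, so it cannot make them vanish. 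Concretely, take $\mathcal{P}=\mathcal{P}'=\sigma_z$, $\mathcal{T}=\mathcal{T}'=\mathcal{K}$, and $\rho=\sigma=\proj{+}$ (the example of Fig.~\ref{fig1} at $p=1$). Then $\tilde\rho=\tilde\sigma=\proj{-}$, so $\Gamma_r(\rho,\mathcal{PT})=\Gamma_r(\sigma,\mathcal{P}'\mathcal{T}')=1$ and the right-hand side equals $2$; but $(\rho\otimes\sigma)^{\Theta}=\tfrac12(\proj{++}+\proj{--})$ has entropy $1$, whence $\Gamma_r(\rho\otimes\sigma,\Theta)=1$. More abstractly, the universal bound $\Gamma_r\le 1$ recorded right after Proposition~\ref{prop:re1} already rules out additivity whenever both summands are strictly positive. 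Your variational argument does give the subadditivity $\Gamma_r(\rho\otimes\sigma,\Theta)\le\Gamma_r(\rho,\mathcal{PT})+\Gamma_r(\sigma,\mathcal{P}'\mathcal{T}')$, and that inequality is the statement that survives; the matching lower bound you hoped to obtain from the factorization identity or from partial-trace monotonicity simply does not hold.
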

\begin{proof}
This comes directly from the representation \eqref{eq:re1} of $\Gamma_r$.
\end{proof}

\begin{lem}
The relative entropy of $\mathcal{PT}$-asymmetry is asymptotically continuous, i.e., for
$\norm{\rho-\sigma}_1\leq \epsilon\leq 1/e$, then
\begin{eqnarray}
\abs{\Gamma_r(\rho, \mathcal{PT})-\Gamma_r(\sigma, \mathcal{PT})}\leq 2\epsilon \log d +2\eta(\epsilon),
\end{eqnarray}
where d is the dimension of the system and $\eta(\epsilon)=-\epsilon \log \epsilon$.
\end{lem}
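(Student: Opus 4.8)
The plan is to reduce the claim to the closed-form expression $\Gamma_r(\rho,\mathcal{PT})=S(\rho^{\mathcal{PT}})-S(\rho)$ established in Proposition~\ref{prop:re1} and then to invoke the Fannes continuity bound twice. Writing
\[
\Gamma_r(\rho,\mathcal{PT})-\Gamma_r(\sigma,\mathcal{PT})=\Br{S(\rho^{\mathcal{PT}})-S(\sigma^{\mathcal{PT}})}-\Br{S(\rho)-S(\sigma)},
\]
the triangle inequality gives $\abs{\Gamma_r(\rho,\mathcal{PT})-\Gamma_r(\sigma,\mathcal{PT})}\leq\abs{S(\rho^{\mathcal{PT}})-S(\sigma^{\mathcal{PT}})}+\abs{S(\rho)-S(\sigma)}$, so it is enough to bound each of the two entropy differences by $\epsilon\log d+\eta(\epsilon)$.

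For the second term this is immediate: by hypothesis $\norm{\rho-\sigma}_1\leq\epsilon\leq1/e$, so the Fannes inequality \cite{Nielsen10} yields $\abs{S(\rho)-S(\sigma)}\leq\epsilon\log d+\eta(\epsilon)$. For the first term I first check that the $\mathcal{PT}$-symmetrization is a contraction in trace norm. From $\rho^{\mathcal{PT}}=\frac12(\rho+\mathcal{PT}\rho\mathcal{PT})$ we get $\rho^{\mathcal{PT}}-\sigma^{\mathcal{PT}}=\frac12(\rho-\sigma)+\frac12\mathcal{PT}(\rho-\sigma)\mathcal{PT}$, and by the triangle inequality it suffices to show $\norm{\mathcal{PT}X\mathcal{PT}}_1=\norm{X}_1$ for the Hermitian operator $X=\rho-\sigma$. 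This is the one point that needs care, since $\mathcal{PT}$ is antilinear: writing $\mathcal{PT}X\mathcal{PT}=\mathcal{P}U X^{*}U^{\dagger}\mathcal{P}$, the complex conjugation $X\mapsto X^{*}$ leaves the spectrum of $\abs{X}=\sqrt{X^{\dagger}X}$ invariant because $(X^{*})^{\dagger}X^{*}=(X^{\dagger}X)^{*}$ has the same eigenvalues as $X^{\dagger}X$, and conjugation by the unitary $\mathcal{P}U$ preserves singular values; hence $\norm{\mathcal{PT}X\mathcal{PT}}_1=\norm{X}_1$. Consequently $\norm{\rho^{\mathcal{PT}}-\sigma^{\mathcal{PT}}}_1\leq\norm{\rho-\sigma}_1\leq\epsilon\leq1/e$, and a second application of the Fannes inequality gives $\abs{S(\rho^{\mathcal{PT}})-S(\sigma^{\mathcal{PT}})}\leq\epsilon\log d+\eta(\epsilon)$ (using that $x\mapsto x\log d+\eta(x)$ is nondecreasing on $(0,1/e]$, should the trace distance of the symmetrized states be strictly smaller than $\epsilon$).

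Adding the two bounds gives $\abs{\Gamma_r(\rho,\mathcal{PT})-\Gamma_r(\sigma,\mathcal{PT})}\leq2\epsilon\log d+2\eta(\epsilon)$, which is the assertion. The main obstacle is the genuinely non-routine step of verifying that the antilinear conjugation $X\mapsto\mathcal{PT}X\mathcal{PT}$ is trace-norm preserving; once that is in hand, the rest is a direct consequence of the closed form of $\Gamma_r$ and the standard Fannes inequality for the von Neumann entropy.
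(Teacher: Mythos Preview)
Your proof is correct and follows exactly the strategy of the paper: use the closed form $\Gamma_r(\rho,\mathcal{PT})=S(\rho^{\mathcal{PT}})-S(\rho)$ from Proposition~\ref{prop:re1}, observe that $\norm{\rho^{\mathcal{PT}}-\sigma^{\mathcal{PT}}}_1\leq\epsilon$, and apply Fannes' inequality to both entropy differences. The only difference is that you supply the details for the trace-norm contraction of the $\mathcal{PT}$-symmetrization (via $\norm{\mathcal{PT}X\mathcal{PT}}_1=\norm{X}_1$), which the paper simply asserts.
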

\begin{proof}
Due to  $\norm{\rho-\sigma}_1\leq \epsilon$, we have
$\norm{\rho^{\mathcal{PT}}-\sigma^{\mathcal{PT}}}_1\leq \epsilon$. Then by
Fannes' inequality \cite{Fannes1973} and \eqref{eq:re1}, we get the desired result.
\end{proof}

\section{Skew information of $\mathcal{PT}$-asymmetry } \label{ap:Sk_pt}

\begin{prop}\label{prop:skewm}
Given the self-inverse unitary operator $\mathcal{P}$ and time reversal operator $\mathcal{T}$, then
$\Gamma_s$ satisfy the conditions (C1), (C2), (C2') and (C3), so it is
a proper $\mathcal{PT}$-asymmetry measure.
\end{prop}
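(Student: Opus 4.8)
The plan is to reduce every one of (C1), (C2), (C2$'$), (C3) for $\Gamma_s$ to standard properties of the \emph{affinity} $A(\rho,\sigma):=\Tr{\rho^{1/2}\sigma^{1/2}}$. The key preliminary step is the antilinear bookkeeping: since $\mathcal{T}=U\mathcal{K}$ with $U=U^t$ and $[\mathcal{P},\mathcal{T}]=0$, one has $(\mathcal{PT})^2=\mathcal{P}^2\mathcal{T}^2=I$, and $\mathcal{PT}\rho\mathcal{PT}=\mathcal{P}U\rho^* U^\dag\mathcal{P}$ is a genuine density operator whose square root is $(\mathcal{PT}\rho\mathcal{PT})^{1/2}=\mathcal{P}U(\rho^{1/2})^* U^\dag\mathcal{P}=\mathcal{PT}\rho^{1/2}\mathcal{PT}$. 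Reading the four‑factor product with this grouping, $\Tr{\rho^{1/2}\mathcal{PT}\rho^{1/2}\mathcal{PT}}=\Tr{\rho^{1/2}(\mathcal{PT}\rho\mathcal{PT})^{1/2}}=A(\rho,\mathcal{PT}\rho\mathcal{PT})$, so that $\Gamma_s(\rho,\mathcal{PT})=1-A(\rho,\mathcal{PT}\rho\mathcal{PT})$; in particular $\Gamma_s$ is real. This identity is the whole point, since it sidesteps the Wigner--Yanase machinery (unavailable, as $\mathcal{PT}$ is neither linear nor Hermitian) in favour of facts about $A$ applied to the pair $(\rho,\sigma)$ with $\sigma=\mathcal{PT}\rho\mathcal{PT}$.

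For (C1) I would note $A(\rho,\sigma)=\Tr{\rho^{1/4}\sigma^{1/2}\rho^{1/4}}\ge0$, and by Cauchy--Schwarz for the Hilbert--Schmidt inner product $A(\rho,\sigma)\le\norm{\rho^{1/2}}_2\,\norm{\sigma^{1/2}}_2=\sqrt{\Tr{\rho}}\sqrt{\Tr{\sigma}}=1$, with equality precisely when $\rho^{1/2}=\sigma^{1/2}$, i.e. $\rho=\mathcal{PT}\rho\mathcal{PT}$; since $(\mathcal{PT})^2=I$ this is equivalent to $[\rho,\mathcal{PT}]=0$, which is (C1) (and incidentally $\Gamma_s\le1$, with $\Gamma_s=1$ iff $A=0$, i.e. $\rho\perp\mathcal{PT}\rho\mathcal{PT}$). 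For (C3): because the $p_n$ are real, $\mathcal{PT}(\cdot)\mathcal{PT}$ acts linearly on convex combinations, so $\mathcal{PT}\big(\sum_n p_n\rho_n\big)\mathcal{PT}=\sum_n p_n\,\mathcal{PT}\rho_n\mathcal{PT}$; by Lieb's concavity theorem $(\rho,\sigma)\mapsto\Tr{\rho^{1/2}\sigma^{1/2}}$ is jointly concave, hence $A\big(\sum p_n\rho_n,\sum p_n\,\mathcal{PT}\rho_n\mathcal{PT}\big)\ge\sum p_n A(\rho_n,\mathcal{PT}\rho_n\mathcal{PT})$, which rearranges to $\Gamma_s(\sum p_n\rho_n,\mathcal{PT})\le\sum p_n\Gamma_s(\rho_n,\mathcal{PT})$.

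For (C2) I would use that the affinity is monotone non‑decreasing under CPTP maps — a standard fact ($1-A$ is the quantum $f$‑divergence for the operator‑convex function $f(x)=1-\sqrt{x}$; it also follows from the joint concavity above by the usual Stinespring‑dilation argument, since an isometry preserves $A$ and partial trace can only increase it). If $\Phi$ is $\mathcal{PT}$‑covariant then $\mathcal{PT}\Phi(\rho)\mathcal{PT}=\Phi(\mathcal{PT}\rho\mathcal{PT})$, so $1-\Gamma_s(\Phi(\rho),\mathcal{PT})=A\big(\Phi(\rho),\Phi(\mathcal{PT}\rho\mathcal{PT})\big)\ge A(\rho,\mathcal{PT}\rho\mathcal{PT})=1-\Gamma_s(\rho,\mathcal{PT})$; the identical computation handles the case of different $(\mathcal{P}_i,\mathcal{T}_i)$ on input and output, including maps into an enlarged space, which is exactly what (C2$'$) needs. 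For (C2$'$) I would then copy the ancilla trick from the $\Gamma_r$ proof: fix $\mathcal{P}_0=I$, $\mathcal{T}_0=*$ and an orthonormal family $\{\ket{\mu}\}$ of real (hence $\mathcal{P}_0\mathcal{T}_0$‑symmetric) vectors; given a selective $\mathcal{PT}$‑covariant $\Phi$ with Kraus operators $\{K_\mu\}$, the channel $\widetilde\Phi$ with $\widetilde K_\mu=\ket{\mu}\ot K_\mu$ is (non‑selectively) $\mathcal{PT}$‑covariant from $\mathcal{PT}$ to $\mathcal{P}_0\mathcal{T}_0\ot\mathcal{PT}$ and sends $\rho$ to $\sum_\mu p_\mu\proj{\mu}\ot\rho_\mu$. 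Using orthogonality of the $\ket{\mu}$, so that $\big(\widetilde\Phi(\rho)\big)^{1/2}=\sum_\mu\sqrt{p_\mu}\,\proj{\mu}\ot\rho_\mu^{1/2}$, and that the conjugation on the ancilla fixes each $\ket{\mu}$, a direct computation gives $\Gamma_s\big(\widetilde\Phi(\rho),\mathcal{P}_0\mathcal{T}_0\ot\mathcal{PT}\big)=\sum_\mu p_\mu\Gamma_s(\rho_\mu,\mathcal{PT})$; combining with (C2) applied to $\widetilde\Phi$ yields (C2$'$). Finally, since $F(\rho,\sigma)=\norm{\rho^{1/2}\sigma^{1/2}}_1\ge\abs{\Tr{\rho^{1/2}\sigma^{1/2}}}=A(\rho,\sigma)$, we get $\Gamma_F\le\Gamma_s$ for free, which is the comparison quoted in the main text.

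The main obstacle will be the first paragraph: recognizing that $\Gamma_s$ is literally $1-A(\rho,\mathcal{PT}\rho\mathcal{PT})$ for the bona fide state $\mathcal{PT}\rho\mathcal{PT}$, and pushing the antilinear algebra through correctly — in particular $(\mathcal{PT})^2=I$, $\mathcal{PT}\rho^{1/2}\mathcal{PT}=(\mathcal{PT}\rho\mathcal{PT})^{1/2}$, and the care needed because $\Tr{AB}=\Tr{BA}$ can fail once antilinear operators are present, so the four‑factor product must be regrouped into a product of \emph{linear} operators before any cyclicity or monotonicity statement is invoked. Once this reduction is in place, (C1), (C2), (C3) follow immediately from nonnegativity/normalization, monotonicity, and joint concavity of the affinity, and (C2$'$) is the same dilation argument already used for $\Gamma_r$.
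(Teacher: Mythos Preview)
Your proposal is correct and follows essentially the same route as the paper: both reduce $\Gamma_s$ to $1-A(\rho,\mathcal{PT}\rho\mathcal{PT})$ via the identity $\mathcal{PT}\rho^{1/2}\mathcal{PT}=(\mathcal{PT}\rho\mathcal{PT})^{1/2}$, then invoke Lieb's joint concavity for (C3), monotonicity of $D_{1/2}$ under CPTP maps for (C2), and the identical ancilla dilation for (C2$'$). The only cosmetic difference is in (C1): the paper bounds $A\le F$ and then uses $F=1\Leftrightarrow\rho=\sigma$, whereas you apply Cauchy--Schwarz directly to get $A\le1$ with equality iff $\rho^{1/2}=\sigma^{1/2}$; your route is slightly more self-contained, while the paper's detour through fidelity is what actually produces the inequality $\Gamma_F\le\Gamma_s$ quoted in the main text (which you recover separately at the end).
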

\begin{proof}
If $[\rho, \mathcal{PT}]=0$, then obviously $\Gamma_s(\rho)=0$. So we only need to
prove the inverse direction.
Since $\mathcal{PT}\rho^{1/2}\mathcal{PT}=\mathcal{P}U\mathcal{K}\rho^{1/2} \mathcal{K}U^* \mathcal{P}=\mathcal{P}U(\rho^{1/2})^* U^\dag \mathcal{P}$ is positive  and
$(\mathcal{PT}\rho^{1/2}\mathcal{PT})^2=\mathcal{PT}\rho \mathcal{PT}$, then $\mathcal{PT}\rho^{1/2}\mathcal{PT}$ is the square root of the
quantum state $\mathcal{PT}\rho \mathcal{PT}$. Besides, $\Tr{\rho^{1/2}\mathcal{PT}\rho^{1/2}\mathcal{PT}}\leq \mathrm{Tr}(|\rho^{1/2}\mathcal{PT}\rho^{1/2}\mathcal{PT}|)=F(\rho, \mathcal{PT}\rho \mathcal{PT})$, where for any two states
$\rho_1$ and $\rho_2$, $F(\rho_1,\rho_2):=\Tr{|\rho^{1/2}_1\rho^{1/2}_2|}$. That is,
$\Gamma_s(\rho, \mathcal{PT})\geq 1-F(\rho, \mathcal{PT}\rho \mathcal{PT})$. As $\Gamma_s(\rho, \mathcal{PT})=0$, then $F(\rho, \mathcal{PT}\rho \mathcal{PT})=1$
which means $\rho=\mathcal{PT}\rho \mathcal{PT}$. Thus $\Gamma_s$ satisfy the condition (C1).

Since $\mathcal{PT}\rho^{1/2}\mathcal{PT}$ is the square root of the
quantum state $\mathcal{PT}\rho \mathcal{PT}$, then
$\Gamma_s(\rho,\mathcal{PT})=1-\Tr{\rho^{1/2}(\mathcal{PT}\rho \mathcal{PT})^{1/2}}$.
The convexity of $\Gamma_s$:
\begin{eqnarray*}
\Gamma_s(p\rho+(1-p)\sigma, \mathcal{PT})\leq p\Gamma_s(\rho, \mathcal{PT})+(1-p)\Gamma_s(\sigma, \mathcal{PT})
\end{eqnarray*}
comes from the following famous result \cite{Lieb1973}:
\begin{eqnarray*}
&&\Tr{(p\rho_1+(1-p)\rho_2)^\alpha (p\sigma_1+(1-p)\sigma_2)^{1-\alpha}}\\
&\geq& p \Tr{\rho^\alpha_1\sigma^{1-\alpha}_1}+(1-p)\Tr{\rho^\alpha_2\sigma^{1-\alpha}_2}.
\end{eqnarray*}
where $\rho_1,\rho_2,\sigma_1,\sigma_2$ are quantum states and $p,\alpha\in[0,1]$.

Besides, for the quantity $D_\alpha(\rho_1,\rho_2)=\Tr{\rho^{\alpha}_1\rho^{1-\alpha}_2}$ with
$\alpha\in (0,1)$, it holds that \cite{Hayashi2006}:
\begin{eqnarray}
D_\alpha(\rho_1,\rho_2)\leq D_\alpha(\Phi(\rho_1),\Phi(\rho_2))
\end{eqnarray}
where $\Phi$ is a quantum operation.
Take $\alpha=1/2$, then we will find that $\Gamma_s$ satisfies the condition (C2).

Finally, similar to proof in Proposition 2,
we take a special self-inverse unitary operator $\mathcal{P}_0=I$ and time reversal operator
$\mathcal{T}_0=*$ with a set
of orthonormal  pure states $\set{\ket{\mu}}_\mu$, $\proj{\mu}\in Sym(\mathcal{P}_0, \mathcal{T}_0)$.
For any selective $\mathcal{PT}$-covariant operation $\Phi$ with $K_\mu(\mathcal{PT}(\cdot)\mathcal{PT})K^\dag_\mu=\mathcal{PT}K_\mu(\cdot)K^\dag_\mu \mathcal{PT}$ for any $\mu$,
it is easy to verify that the quantum operations $\widetilde{\Phi}$ with
Kraus operators $\widetilde{K}_\mu=\ket{\mu}\ot K_\mu$ is selective
$\mathcal{PT}$-covariant with respect to $(\mathcal{P}, \mathcal{T})$ and $(\mathcal{P}_0\ot \mathcal{P}, \mathcal{T}_0\ot \mathcal{T})$,  thus $\widetilde{\Phi}(\rho)=\sum_\mu p_{\mu}
\proj{\mu}\ot \rho_\mu$, where $\rho_\mu=K_\mu\rho K^\dag_\mu/p_\mu$ with $p_\mu=\mathrm{Tr}(K_\mu \rho K^\dag_\mu)$.
As we have proved that $\Gamma_r$ satisfies the condition (C2), which implies that
$\Gamma_s(\sum_\mu p_{\mu}
\proj{\mu}\ot \rho_\mu, \mathcal{P}_0\mathcal{T}_0\ot \mathcal{PT})\leq \Gamma_s(\rho, \mathcal{PT})$.
Moreover, it is easy to verify that
 \begin{eqnarray}
&&\Gamma_s(\sum_\mu p_{\mu}\proj{\mu}\ot \rho_\mu, \mathcal{P}_0\mathcal{T}_0\ot \mathcal{PT}\nonumber)\\
&=&\sum_\mu p_\mu \Gamma_s(\rho_\mu, \mathcal{PT}).
\end{eqnarray}
Hence, the condition (C2') holds for $\Gamma_s$.

\end{proof}

\section{Geometric measure of $\mathcal{PT}$ asymmetry}\label{ap:F_pt}
\begin{prop}
Given the self-inverse unitary operator $\mathcal{P}$ and time reversal operator $\mathcal{T}$, then
$\Gamma_F$ satisfy the conditions (C1), (C2), (C2') and (C3), so it is
a proper $\mathcal{PT}$-asymmetry measure.
\end{prop}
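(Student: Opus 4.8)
The plan is to obtain each of (C1)--(C3) directly from standard properties of the Uhlmann fidelity $F(\alpha,\beta)=\Tr{\sqrt{\sqrt{\alpha}\,\beta\,\sqrt{\alpha}}}$, relying on one structural input already established just above the statement: the map $\rho\mapsto\mathcal{PT}\rho\mathcal{PT}=\mathcal{P}U\rho^{*}U^{\dag}\mathcal{P}$ sends density operators to density operators and depends real-linearly on $\rho$, so that $\mathcal{PT}\bigl(\sum_{n}p_{n}\rho_{n}\bigr)\mathcal{PT}=\sum_{n}p_{n}\,\mathcal{PT}\rho_{n}\mathcal{PT}$ for nonnegative reals $p_{n}$. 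I would also record $(\mathcal{PT})^{2}=I$ (from $\mathcal{P}^{2}=I$, $\mathcal{T}^{2}=U\bar{U}=I$ since $U$ is unitary with $U=U^{t}$, and $[\mathcal{P},\mathcal{T}]=0$), which makes $[\rho,\mathcal{PT}]=0$ equivalent to $\mathcal{PT}\rho\mathcal{PT}=\rho$. Condition (C1) is then immediate: $F\le 1$ with equality iff the two states coincide, hence $\Gamma_{F}(\rho,\mathcal{PT})=0$ iff $F(\rho,\mathcal{PT}\rho\mathcal{PT})=1$ iff $\mathcal{PT}\rho\mathcal{PT}=\rho$ iff $[\rho,\mathcal{PT}]=0$. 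Condition (C3) follows from joint concavity of the fidelity: for $\omega=\sum_{n}p_{n}\rho_{n}$ one has $\mathcal{PT}\omega\mathcal{PT}=\sum_{n}p_{n}\,\mathcal{PT}\rho_{n}\mathcal{PT}$, so $F(\omega,\mathcal{PT}\omega\mathcal{PT})\ge\sum_{n}p_{n}F(\rho_{n},\mathcal{PT}\rho_{n}\mathcal{PT})$, which is exactly $\Gamma_{F}(\omega,\mathcal{PT})\le\sum_{n}p_{n}\Gamma_{F}(\rho_{n},\mathcal{PT})$.

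For (C2) I would combine covariance with monotonicity of the fidelity under quantum channels. If $\Phi$ is $\mathcal{PT}$-covariant then $\mathcal{PT}\Phi(\rho)\mathcal{PT}=\Phi(\mathcal{PT}\rho\mathcal{PT})$, whence $F\bigl(\Phi(\rho),\mathcal{PT}\Phi(\rho)\mathcal{PT}\bigr)=F\bigl(\Phi(\rho),\Phi(\mathcal{PT}\rho\mathcal{PT})\bigr)\ge F(\rho,\mathcal{PT}\rho\mathcal{PT})$, i.e.\ $\Gamma_{F}(\Phi(\rho),\mathcal{PT})\le\Gamma_{F}(\rho,\mathcal{PT})$. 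The antilinearity of $\mathcal{PT}$ plays no role here, because $\rho\mapsto\mathcal{PT}\rho\mathcal{PT}$ is just an ordinary positive trace-preserving map on states; in particular, unlike for $\Gamma_{r}$, Lemma~\ref{lem:1} is not needed for $\Gamma_{F}$.

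The only step requiring real work is (C2'), which I would reduce to (C2) by the ancilla construction already used for $\Gamma_{r}$ and $\Gamma_{s}$: choose $\mathcal{P}_{0}=I$, $\mathcal{T}_{0}=*$ and an orthonormal family $\set{\ket{\mu}}$ with $\proj{\mu}\in Sym(\mathcal{P}_{0},\mathcal{T}_{0})$, and form the channel $\widetilde{\Phi}$ with Kraus operators $\widetilde{K}_{\mu}=\ket{\mu}\ot K_{\mu}$. The selective covariance $K_{\mu}\mathcal{PT}\rho\mathcal{PT}K_{\mu}^{\dag}=\mathcal{PT}K_{\mu}\rho K_{\mu}^{\dag}\mathcal{PT}$ makes $\widetilde{\Phi}$ genuinely covariant relative to $(\mathcal{PT},\,\mathcal{P}_{0}\mathcal{T}_{0}\ot\mathcal{PT})$, and $\widetilde{\Phi}(\rho)=\sum_{\mu}p_{\mu}\proj{\mu}\ot\rho_{\mu}$. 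Applying (C2) to $\widetilde{\Phi}$ gives $\Gamma_{F}\bigl(\widetilde{\Phi}(\rho),\mathcal{P}_{0}\mathcal{T}_{0}\ot\mathcal{PT}\bigr)\le\Gamma_{F}(\rho,\mathcal{PT})$; on the other hand, since the $\ket{\mu}$ are fixed by $\mathcal{P}_{0}\mathcal{T}_{0}$, both $\widetilde{\Phi}(\rho)$ and $(\mathcal{P}_{0}\mathcal{T}_{0}\ot\mathcal{PT})\widetilde{\Phi}(\rho)(\mathcal{P}_{0}\mathcal{T}_{0}\ot\mathcal{PT})=\sum_{\mu}p_{\mu}\proj{\mu}\ot\mathcal{PT}\rho_{\mu}\mathcal{PT}$ are block-diagonal with matching weights $p_{\mu}$, so the identity $F\bigl(\bigoplus_{\mu}p_{\mu}\alpha_{\mu},\bigoplus_{\mu}p_{\mu}\beta_{\mu}\bigr)=\sum_{\mu}p_{\mu}F(\alpha_{\mu},\beta_{\mu})$ yields $\Gamma_{F}\bigl(\widetilde{\Phi}(\rho),\mathcal{P}_{0}\mathcal{T}_{0}\ot\mathcal{PT}\bigr)=\sum_{\mu}p_{\mu}\Gamma_{F}(\rho_{\mu},\mathcal{PT})$, which is (C2'). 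I expect the main obstacle to be purely bookkeeping: confirming that $\widetilde{\Phi}$ really is covariant in this cross-Hilbert-space sense so that (C2) legitimately applies to it, and checking the block-diagonal fidelity identity in the present setting; once that is in place the rest is a routine assembly of standard fidelity facts.
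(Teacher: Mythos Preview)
Your proposal is correct and follows essentially the same route as the paper: (C1) from the faithfulness of fidelity, (C3) from joint concavity, (C2) from monotonicity of fidelity under CPTP maps combined with covariance, and (C2') via the same ancilla construction $\widetilde{K}_{\mu}=\ket{\mu}\ot K_{\mu}$ with $\mathcal{P}_{0}=I$, $\mathcal{T}_{0}=*$ used for $\Gamma_{r}$ and $\Gamma_{s}$, reducing to (C2) and then invoking the block-diagonal behaviour of the fidelity. Your extra bookkeeping remarks (real-linearity of $\rho\mapsto\mathcal{PT}\rho\mathcal{PT}$, the explicit block-diagonal fidelity identity) simply fill in details the paper leaves implicit.
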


\begin{proof}
(C1) is obvious, since $F(\rho, \mathcal{PT}\rho \mathcal{PT})=1$ iff $\rho=\mathcal{PT}\rho \mathcal{PT}$.
The convexity (C3) of $\Gamma_F$ comes from the joint concavity of fidelity (See \cite{Miszczak2009} and the reference therein ).
As fidelity is non-decreasing under CPTP maps (See \cite{Miszczak2009} and the reference therein ), thus
$\Gamma_F$ satisfies the condition (C2).
Moreover, using a similar method as the proof in  $\Gamma_r$ and
$\Gamma_s$, we can prove the condition (C2').
Take a special self-inverse unitary operator $\mathcal{P}_0=I$ and a time reversal operator
$\mathcal{T}_0=*$ with a set
of orthonormal  pure states $\set{\ket{\mu}}_\mu$, $\proj{\mu}\in Sym(\mathcal{P}_0, \mathcal{T}_0)$.
For any selective $\mathcal{PT}$-covariant operation $\Phi$ with $K_\mu(\mathcal{PT}(\cdot)\mathcal{PT})K^\dag_\mu=\mathcal{PT}K_\mu(\cdot)K^\dag_\mu \mathcal{PT}$ for any $\mu$,
it is easy to verify that the quantum operations $\widetilde{\Phi}$ with
Kraus operators $\widetilde{K}_\mu=\ket{\mu}\ot K_\mu$ is selective
$\mathcal{PT}$-covariant with respect to $(\mathcal{P}, \mathcal{T})$ and $(\mathcal{P}_0\ot \mathcal{P}, \mathcal{T}_0\ot \mathcal{T})$, thus $\widetilde{\Phi}(\rho)=\sum_\mu p_{\mu}
\proj{\mu}\ot \rho_\mu$, where $\rho_\mu=K_\mu\rho K^\dag_\mu/p_\mu$ with $p_\mu=\mathrm{Tr}(K_\mu \rho K^\dag_\mu)$.
As we have proved that $\Gamma_r$ satisfies the condition (C2), which implies that
$\Gamma_F(\sum_\mu p_{\mu}
\proj{\mu}\ot \rho_\mu,\mathcal{P}_0\mathcal{T}_0\ot \mathcal{PT})\leq \Gamma_F(\rho, \mathcal{PT})$.
Moreover, it is easy to verify that
 \begin{eqnarray}
&&\Gamma_F(\sum_\mu p_{\mu}\proj{\mu}\ot \rho_\mu,\mathcal{P}_0\mathcal{T}_0\ot \mathcal{PT}\nonumber)\\
&=&\sum_\mu p_\mu \Gamma_F(\rho_\mu, \mathcal{PT}).
\end{eqnarray}
Hence, the condition (C2') holds for $\Gamma_F$.

\end{proof}

\begin{lem}\label{lem:prop_of_fid}
The $\mathcal{PT}$ asymmetry measure $\Gamma_F$ is continuous, that is
\begin{eqnarray}
|\Gamma_F(\rho, \mathcal{PT})
-\Gamma_F(\sigma, \mathcal{PT})|
\leq 2\sqrt{\norm{\rho-\sigma}_1}.
\end{eqnarray}

\end{lem}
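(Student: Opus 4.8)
The plan is to start from the closed form $\Gamma_F(\rho,\mathcal{PT}) = 1 - F(\rho,\mathcal{PT}\rho\mathcal{PT})$, so that
\begin{eqnarray*}
\abs{\Gamma_F(\rho,\mathcal{PT}) - \Gamma_F(\sigma,\mathcal{PT})} = \abs{F(\sigma,\mathcal{PT}\sigma\mathcal{PT}) - F(\rho,\mathcal{PT}\rho\mathcal{PT})},
\end{eqnarray*}
and to reduce the claim to a continuity estimate for the fidelity. The clean route is to work with the \emph{Bures angle} $A(\rho,\sigma):=\arccos F(\rho,\sigma)\in[0,\pi/2]$, which is a genuine metric on density operators (see \cite{Miszczak2009} and references therein), rather than with $\norm{\sqrt{\rho}-\sqrt{\sigma}}_1$-type quantities, which only give dimension-dependent bounds.

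First I would record two facts. (i) The map $X\mapsto \mathcal{PT}X\mathcal{PT}= V X^{*}V^{\dag}$, where $V=\mathcal{P}U$ is unitary, preserves the fidelity: $F(\mathcal{PT}\rho\mathcal{PT},\mathcal{PT}\sigma\mathcal{PT}) = F(\rho,\sigma)$. Indeed unitary conjugation leaves $F$ invariant, and $F(\rho^{*},\sigma^{*})=F(\rho,\sigma)$ because $\sqrt{\rho^{*}}=(\sqrt{\rho})^{*}$ and the trace of a positive operator is unchanged under complex conjugation — this is the fidelity analogue of Lemma \ref{lem:1}, and the antilinearity of $\mathcal{PT}$ is precisely why one cannot simply quote ``unitary invariance''. (ii) Applying the triangle inequality for $A$ twice and using (i),
\begin{eqnarray*}
A(\rho,\mathcal{PT}\rho\mathcal{PT}) \le A(\rho,\sigma) + A(\sigma,\mathcal{PT}\sigma\mathcal{PT}) + A(\rho,\sigma),
\end{eqnarray*}
and symmetrically, so $\abs{A(\rho,\mathcal{PT}\rho\mathcal{PT}) - A(\sigma,\mathcal{PT}\sigma\mathcal{PT})} \le 2A(\rho,\sigma)$.

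To finish, write $F(\rho,\mathcal{PT}\rho\mathcal{PT})=\cos\alpha$ and $F(\sigma,\mathcal{PT}\sigma\mathcal{PT})=\cos\beta$ with $\alpha,\beta\in[0,\pi/2]$, and set $\gamma=A(\rho,\sigma)\in[0,\pi/2]$, so $\abs{\alpha-\beta}\le 2\gamma$. Using $\cos\alpha-\cos\beta=-2\sin\tfrac{\alpha+\beta}{2}\sin\tfrac{\alpha-\beta}{2}$, the bound $\sin\tfrac{\alpha+\beta}{2}\le 1$, and monotonicity of $\sin$ on $[0,\pi/2]$,
\begin{eqnarray*}
\abs{\Gamma_F(\rho,\mathcal{PT}) - \Gamma_F(\sigma,\mathcal{PT})} = \abs{\cos\alpha-\cos\beta}\le 2\sin\tfrac{\abs{\alpha-\beta}}{2}\le 2\sin\gamma = 2\sqrt{1-F(\rho,\sigma)^{2}}.
\end{eqnarray*}
Then $1-F(\rho,\sigma)^{2} = (1-F(\rho,\sigma))(1+F(\rho,\sigma)) \le 2\bigl(1-F(\rho,\sigma)\bigr) \le \norm{\rho-\sigma}_1$ by the Fuchs--van de Graaf inequality $1-F(\rho,\sigma)\le \tfrac12\norm{\rho-\sigma}_1$, yielding $\abs{\Gamma_F(\rho,\mathcal{PT}) - \Gamma_F(\sigma,\mathcal{PT})}\le 2\sqrt{\norm{\rho-\sigma}_1}$.

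I expect the main obstacle to be twofold, and both are about being careful rather than deep. First, step (i): since $\mathcal{PT}$ is antilinear one has to verify invariance of the fidelity from scratch, in the spirit of Lemma \ref{lem:1}, instead of invoking unitary invariance. Second, getting the sharp constant $2$: the quicker argument — noting that $\sqrt{1-F}$ (the Bures distance up to scale) is a metric, so $\abs{\sqrt{\Gamma_F(\rho)}-\sqrt{\Gamma_F(\sigma)}}\le 2\sqrt{1-F(\rho,\sigma)}$, then multiplying by $\sqrt{\Gamma_F(\rho)}+\sqrt{\Gamma_F(\sigma)}\le 2$ — loses a factor and only gives $2\sqrt{2}\,\sqrt{\norm{\rho-\sigma}_1}$; passing instead through the Bures \emph{angle} and the product-to-sum identity above avoids the lossy ``difference of squares'' step and recovers the stated constant.
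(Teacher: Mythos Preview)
Your proof is correct and follows essentially the same route as the paper: both arguments use the $\mathcal{PT}$-invariance of the fidelity together with a triangle-inequality step to reach the intermediate bound $\abs{\Gamma_F(\rho,\mathcal{PT})-\Gamma_F(\sigma,\mathcal{PT})}\le 2\sqrt{1-F(\rho,\sigma)^2}$, and then finish identically via $1-F^2\le 2(1-F)$ and Fuchs--van de Graaf. The only difference is in how the intermediate bound is obtained: the paper inserts $F(\rho,\mathcal{PT}\sigma\mathcal{PT})$ and invokes the fidelity continuity estimate $\abs{F(\rho_1,\tau)-F(\rho_2,\tau)}\le \sqrt{1-F(\rho_1,\rho_2)^2}$ from \cite{Fannes2012} twice, whereas you pass through the Bures angle metric and the product-to-sum identity for cosines; contrary to your closing remark, the paper's direct fidelity-continuity route already yields the constant $2$ without needing the Bures-angle detour.
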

\begin{proof}
\begin{eqnarray*}
&&\abs{\Gamma_F(\rho,\mathcal{PT})
-\Gamma_F(\sigma,\mathcal{PT})}\\
&=&\abs{F(\rho, \mathcal{PT}\rho \mathcal{PT})-F(\sigma, \mathcal{PT}\sigma \mathcal{PT})}\\
&\leq&\abs{F(\rho, \mathcal{PT}\rho \mathcal{PT})-F(\rho, \mathcal{PT}\sigma \mathcal{PT})}\\
&&+\abs{F(\rho, \mathcal{PT}\sigma \mathcal{PT})-F(\sigma, \mathcal{PT}\sigma \mathcal{PT})}\\
&\leq& \sqrt{1-F(\mathcal{PT}\rho \mathcal{PT}, \mathcal{PT}\sigma \mathcal{PT})^2}
+\sqrt{1-F(\rho,\sigma)^2}\\
&=&2\sqrt{1-F(\rho,\sigma)^2}
\leq 2\sqrt{2}\sqrt{1-F(\rho,\sigma)}\\
&\leq& 2\sqrt{\norm{\rho-\sigma}_1}.
\end{eqnarray*}
where the second inequality comes from \cite{Fannes2012} and the last
inequality comes from the  Fuchs-van de Graaf inequality \cite{Fuchs1999}.
\end{proof}

\section{Duality of $\mathcal{PT}$-Asymmetry and Entanglement}\label{ap:PTvsEnt}

\begin{thm}\label{prop:PTvsEnt}
Given a two-qubit system with the self-inverse unitary operator $\mathcal{P}=\sigma_y\ot \sigma_y$
and time reversal operator $\mathcal{T}=*$, for pure bipartite states $\ket{\Psi}$ we have
\begin{eqnarray}\label{eq:skvse}
\Gamma_s(\Psi, \mathcal{PT})+C(\Psi)^2=1,\\
\label{eq:fvse}
\Gamma_F(\Psi, \mathcal{PT})+C(\Psi)=1,
\end{eqnarray}
and
\begin{eqnarray}\label{eq:rvse}
\Gamma_r (\Psi, \mathcal{PT})=H\Pa{\frac{1}{2}-\frac{1}{2}C(\Psi)}
\end{eqnarray}
where $H(p)=-p\log(p)-(1-p)\log(1-p)$ is Shannon entropy for the probability
distribution $\set{p,1-p}$ and $C(\Psi)$ is the concurrence for pure state $\Psi$.

For any two-qubit states $\rho$, the equalities  may not hold. However, we still have the
following inequality:
\begin{eqnarray}
\label{ineq:skvse}\Gamma_s(\rho, \mathcal{PT})+C(\rho)^2\leq1,\\
\label{ineq:fvse}\Gamma_F(\rho, \mathcal{PT})+C(\rho)\leq 1,\\
\label{ineq:rvse} \Gamma_r(\rho, \mathcal{PT})\leq H(\frac{1}{2}-\frac{1}{2}C(\rho)),
\end{eqnarray}
where $C(\rho)=\min \sum_k p_k C(\Psi_k)$ and
the minimum is taken over all the pure states decomposition of $\rho=\sum_k p_k\proj{\Psi_k}$ \cite{Wootters1998,Uhlmann2000}.

In fact,
\begin{eqnarray}\label{eq:coa}
\Gamma_F(\rho, \mathcal{PT})+CoA(\rho)=1
\end{eqnarray}
where the concurrence of assistance $CoA(\rho)=\max \sum_k p_k C(\Psi_k)$ and the maximum is taken over all the
 pure states decomposition of $\rho=\sum_k p_k\proj{\Psi_k}$ \cite{Laustsen2003,Gour2005}.

\end{thm}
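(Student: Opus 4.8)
The plan is to reduce the three pure-state identities to the single observation that, for $\mathcal{P}=\sigma_y\ot\sigma_y$ and $\mathcal{T}=*$, one has $|\bra{\Psi}\mathcal{PT}\ket{\Psi}|=C(\Psi)$, and then to obtain the mixed-state bounds by convex-roof and Jensen arguments built on the measure properties already established. First I would check that this $(\mathcal{P},\mathcal{T})$ satisfies the standing hypotheses: $\sigma_y\ot\sigma_y$ is Hermitian with $(\sigma_y\ot\sigma_y)^2=I$, it is a real symmetric matrix, hence $(\sigma_y\ot\sigma_y)^t=\sigma_y\ot\sigma_y$ and $[\sigma_y\ot\sigma_y,\mathcal{K}]=0$, and $\mathcal{T}=U\mathcal{K}$ with $U=I=I^t$. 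Consequently Eqs.~\eqref{eq:psk} and \eqref{eq:pre} of the main text apply here. Since $\mathcal{PT}\ket{\Psi}=\sigma_y\ot\sigma_y\,\mathcal{K}\ket{\Psi}=\sigma_y\ot\sigma_y\ket{\Psi^*}$, we get $|\bra{\Psi}\mathcal{PT}\ket{\Psi}|=|\bra{\Psi}\sigma_y\ot\sigma_y\,\mathcal{K}\ket{\Psi}|=C(\Psi)$ by the definition \eqref{eq:defc}. Substituting this into \eqref{eq:psk} and \eqref{eq:pre} yields at once $\Gamma_s(\Psi,\mathcal{PT})=1-C(\Psi)^2$, $\Gamma_F(\Psi,\mathcal{PT})=1-C(\Psi)$, and $\Gamma_r(\Psi,\mathcal{PT})=H\!\left(\tfrac12-\tfrac12 C(\Psi)\right)$, i.e.\ \eqref{eq:skvse}--\eqref{eq:rvse}.

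For the identity \eqref{eq:coa} I would invoke the Uhlmann decomposition formula \eqref{eq:deco_F}: $\Gamma_F(\rho,\mathcal{PT})=\min\sum_k p_k\,\Gamma_F(\psi_k,\mathcal{PT})$ over all pure-state decompositions $\rho=\sum_k p_k\proj{\psi_k}$. Plugging in the pure-state value $\Gamma_F(\psi_k,\mathcal{PT})=1-C(\psi_k)$ gives $\Gamma_F(\rho,\mathcal{PT})=1-\max\sum_k p_k C(\psi_k)=1-CoA(\rho)$, which is \eqref{eq:coa}. Inequality \eqref{ineq:fvse} then follows because $C(\rho)=\min\sum_k p_k C(\psi_k)\le\max\sum_k p_k C(\psi_k)=CoA(\rho)$, so $\Gamma_F(\rho,\mathcal{PT})+C(\rho)\le\Gamma_F(\rho,\mathcal{PT})+CoA(\rho)=1$.

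For \eqref{ineq:skvse} and \eqref{ineq:rvse} I would fix a pure-state decomposition $\rho=\sum_k p_k\proj{\psi_k}$ attaining the convex roof $C(\rho)=\sum_k p_k C(\psi_k)$ and apply convexity (C3) of $\Gamma_s$ and $\Gamma_r$ together with the pure-state identities. For $\Gamma_s$: $\Gamma_s(\rho,\mathcal{PT})\le\sum_k p_k\,\Gamma_s(\psi_k,\mathcal{PT})=1-\sum_k p_k C(\psi_k)^2\le 1-\big(\sum_k p_k C(\psi_k)\big)^2=1-C(\rho)^2$, the last step being Jensen for $x\mapsto x^2$. For $\Gamma_r$: $\Gamma_r(\rho,\mathcal{PT})\le\sum_k p_k\,H\!\left(\tfrac12-\tfrac12 C(\psi_k)\right)\le H\!\left(\tfrac12-\tfrac12\sum_k p_k C(\psi_k)\right)=H\!\left(\tfrac12-\tfrac12 C(\rho)\right)$, where the second inequality is Jensen applied to the map $x\mapsto H\!\left(\tfrac12-\tfrac12 x\right)$, which is concave on $[0,1]$ since its second derivative equals $\tfrac14 H''\!\left(\tfrac12-\tfrac12 x\right)<0$.

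The pure-state part is essentially immediate once $|\bra{\Psi}\mathcal{PT}\ket{\Psi}|=C(\Psi)$ is recognized, so I expect the main subtlety to lie in \eqref{eq:coa}: one must be careful that the convex-roof structure of $\Gamma_F$ supplied by \eqref{eq:deco_F} for the conjugation $\mathcal{PT}=\sigma_y\ot\sigma_y\,\mathcal{K}$ is precisely the one whose optimal decomposition realizes $CoA(\rho)$ as the concave roof of $C(\psi_k)$, and to observe that the $\Gamma_s$ and $\Gamma_r$ statements remain one-sided for mixed states because the Jensen steps—and the gap between a measure and its convex roof—are in general strict.
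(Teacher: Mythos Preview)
Your proposal is correct and follows essentially the same route as the paper: verify that $\mathcal{P}=\sigma_y\ot\sigma_y$, $\mathcal{T}=*$ meet the standing hypotheses, read off the pure-state identities from \eqref{eq:psk}, \eqref{eq:pre}, \eqref{eq:defc}, obtain \eqref{eq:coa} from the convex-roof formula \eqref{eq:deco_F}, and deduce the mixed-state inequalities by convexity of the measures together with Jensen for $x\mapsto x^2$ and for $x\mapsto H(\tfrac12-\tfrac12 x)$. The only cosmetic difference is that the paper derives \eqref{ineq:fvse} directly from convexity of $\Gamma_F$ rather than as a corollary of \eqref{eq:coa}; your route via $C(\rho)\le CoA(\rho)$ is equally valid and arguably cleaner.
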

\begin{proof}
The equations \eqref{eq:skvse}, \eqref{eq:fvse}, \eqref{eq:rvse} come directly from
\eqref{eq:psk}, \eqref{eq:pre} and \eqref{eq:defc} when $\mathcal{P}=\sigma_y\ot \sigma_y$ and $\mathcal{T}=*$.
Thus we only need to verify that $\mathcal{P}=\sigma_y\ot \sigma_y$ and $\mathcal{T}=*$ satisfy these three condition.
Obviously, $\sigma_y\ot \sigma_y=(\sigma_y\ot \sigma_y)^\dag$ and $(\sigma_y\ot \sigma_y)^2=I=\mathcal{T}^2$.
Furthermore, as $\mathcal{K}\sigma_y \mathcal{K}=\sigma^*_y=-\sigma_y$, then $\mathcal{K}(\sigma_y\ot\sigma_y)\mathcal{K}=\sigma_y\ot\sigma_y$.

Due to the convexity of $\Gamma_s$ and $C(\rho)$, thus for any pure state
decomposition of $\rho=\sum_k p_k\proj{\Psi_k}$, we have
$\Gamma_s(\rho, \mathcal{PT})\leq \sum_k p_k \Gamma_s(\Psi_k,\mathcal{PT})$ and
$C(\rho)\leq \sum_k p_k C(\Psi_k)$. Based on the convexity of
$f(x)=x^2$, we have $C(\rho)^2\leq \sum_k p_kC(\Psi_k)^2$.
Therefore, due to the equality \eqref{eq:skvse},
we get the inequality \eqref{ineq:skvse}.

Similarly, \eqref{ineq:fvse} comes directly from the convexity of $\Gamma_F$ and $C(\rho)$.
And \eqref{ineq:rvse} also comes from the convexity of $\Gamma_r$, $C(\rho)$ and the
concavity of Shannon entropy $H$,
as for any pure states decomposition of $\rho=\sum_k p_k\proj{\Psi_k}$,
\begin{eqnarray*}
\Gamma_r(\rho, \mathcal{PT})&\leq& \sum_k p_k\Gamma_r(\Psi_k, \mathcal{PT})\\
&=&\sum_k p_k H(\frac{1}{2}-\frac{1}{2}C(\Psi_k))\\
&\leq& H(\frac{1}{2}-\frac{1}{2}\sum_k p_k C(\Psi_k))\\
&\leq& H(\frac{1}{2}-\frac{1}{2}C(\rho)).
\end{eqnarray*}

Finally, \eqref{eq:coa} comes directly from the definition of
concurrence of assistance (CoA) and Eq.\ref{eq:deco_F}.

\end{proof}

\section{$\mathcal{PT}$-symmetric dynamics}\label{sec:discus}
Consider the unitary operator $V$ which is $\mathcal{PT}$-covariant, that is
\begin{eqnarray}
\mathcal{PT}V\rho V^\dag \mathcal{PT}=V\mathcal{PT}\rho \mathcal{PT} V^\dag, ~for~any~\rho,
\end{eqnarray}
then
\begin{eqnarray}
V\mathcal{PT}=e^{i\theta} \mathcal{PT}V.
\end{eqnarray}

Besides, the unitary operator $V$ is called $\mathcal{PT}$-invariant unitary if $[V,\mathcal{PT}]=0$.
If $[V, \mathcal{PT}]=0$, then $[V^\dag, \mathcal{PT}]=0$. Note that, if the Hamiltonian $H$ satisfying
$\set{H, \mathcal{PT}}=0$ where $\set{A, B}=AB+BA$, then the unitaries  $e^{iHt}$ satisfy$[e^{iHt}, \mathcal{PT}]=0$.

Similar to entanglement \cite{Nielsen1999} and coherence \cite{Du2015}, we can also consider the  state transformation
under $\mathcal{PT}$-covariant operation.

\begin{prop}\label{prop:con_tran}
Pure state $\ket{\psi}$ can be transformed to $\ket{\phi}$ under selective $\mathcal{PT}$-covariant
operations if and only if $|\bra{\psi}\mathcal{PT}\ket{\psi}|\leq |\bra{\phi}\mathcal{PT} \ket{\phi}|$.
\end{prop}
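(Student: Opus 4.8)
The plan is to reduce the problem to a known result on pure-state transformations. The key observation is that for any pure state $\ket{\psi}$, the quantity $|\bra{\psi}\mathcal{PT}\ket{\psi}|$ is the single invariant controlling $\mathcal{PT}$-asymmetry: indeed, $\rho^{\mathcal{PT}}=\frac{1}{2}(\proj{\psi}+\mathcal{PT}\proj{\psi}\mathcal{PT})$ has spectrum $\{\tfrac{1}{2}\pm\tfrac{1}{2}|\bra{\psi}\mathcal{PT}\ket{\psi}|\}$, so $\ket{\psi}$ together with its $\mathcal{PT}$-image spans a two-dimensional subspace, and the whole situation looks like a qubit. First I would set $c_\psi:=|\bra{\psi}\mathcal{PT}\ket{\psi}|$ and note that by the monotonicity condition (C2') established for $\Gamma_s$ (Proposition \ref{prop:skewm}), together with \eqref{eq:psk}, any selective $\mathcal{PT}$-covariant transformation $\ket{\psi}\to\ket{\phi}$ (deterministic, i.e.\ succeeding with probability one) forces $\Gamma_s(\phi,\mathcal{PT})\leq\Gamma_s(\psi,\mathcal{PT})$, hence $1-c_\phi^2\leq 1-c_\psi^2$, i.e.\ $c_\psi\leq c_\phi$. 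This gives the necessity direction immediately.

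For sufficiency, suppose $c_\psi\leq c_\phi$. I would construct an explicit selective $\mathcal{PT}$-covariant channel achieving the transformation. Write $\ket{\psi}$ in terms of the eigenbasis $\{\ket{e_+},\ket{e_-}\}$ of $\rho_\psi^{\mathcal{PT}}$; there is a natural "$\mathcal{PT}$-real" structure here, and in an appropriate basis one may take $\mathcal{PT}=\mathcal{K}$ (conjugation), so that $\mathcal{PT}$-symmetric pure states are exactly the real ones. In such coordinates a qubit pure state can be parametrized so that $c_\psi$ is, up to normalization, the modulus of the imaginary part, and the problem becomes the standard majorization/Nielsen-type question of decreasing a single Schmidt-like parameter by a stochastic map built from $\mathcal{PT}$-covariant Kraus operators. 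I would exhibit two Kraus operators $K_1,K_2$ with $K_1^\dagger K_1+K_2^\dagger K_2=I$, each satisfying $K_\mu\mathcal{PT}=\mathcal{PT}K_\mu$ (up to a phase, which is harmless in the selective condition since it cancels in $K_\mu(\mathcal{PT}(\cdot)\mathcal{PT})K_\mu^\dagger$), such that $K_\mu\ket{\psi}\propto\ket{\phi}$ for each $\mu$ — for instance taking $K_\mu$ diagonal in the real basis adjusted by a real rotation that matches the two relevant eigenvalue pairs. Since $c$ is the only relevant parameter, a direct computation shows this works exactly when $c_\psi\leq c_\phi$.

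The main obstacle I expect is the sufficiency construction: one must produce Kraus operators that are genuinely $\mathcal{PT}$-covariant (antilinear $\mathcal{PT}$ makes this more delicate than the unitary-symmetry analogue — e.g.\ $K_\mu\mathcal{PT}=e^{i\theta_\mu}\mathcal{PT}K_\mu$ rather than strict commutation, and one must check that the $e^{i\theta_\mu}$ do not obstruct the selective condition) and that nonetheless map $\ket{\psi}$ into the ray of $\ket{\phi}$ while summing to the identity. The cleanest route is to pass to the basis where $\mathcal{PT}=\mathcal{K}$, reducing everything to real matrices, and then invoke the qubit coherence/asymmetry transformation results (cf.\ the coherence analogue \cite{Du2015} and \cite{Nielsen1999}) which govern exactly monotones of this single-parameter type. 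A secondary point to handle carefully is that $\ket{\phi}$ may not lie in the two-dimensional span of $\ket{\psi}$ and $\mathcal{PT}\ket{\psi}$; but since any target ray with the prescribed value of $c_\phi$ is reachable, one first transforms within the relevant qubit block and then applies a $\mathcal{PT}$-covariant unitary (those satisfying $[V,\mathcal{PT}]=0$, or more generally $V\mathcal{PT}=e^{i\theta}\mathcal{PT}V$) to rotate into the desired ray.
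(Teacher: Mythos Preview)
Your proposal is on the right track and shares the paper's core idea, but the execution differs. For necessity both you and the paper use monotonicity of a $\mathcal{PT}$-asymmetry measure (the paper simply calls this direction ``obvious''). For sufficiency you correctly identify the key reduction: pass to a $\mathcal{PT}$-invariant basis in which $\mathcal{PT}$ acts as complex conjugation $\mathcal{K}$. The paper does exactly this, but then immediately invokes the time-reversal transformation result of Ref.~\cite{Gour2009JMP}, which furnishes a CPTP map with Kraus operators $K_\mu$ satisfying $[K_\mu,\mathcal{K}_N]=0$ whenever $|\bra{\psi}\mathcal{K}_N\ket{\psi}_N|\le|\bra{\phi}\mathcal{K}_N\ket{\phi}_N|$; conjugating these by the basis-change unitary $U$ yields the required selective $\mathcal{PT}$-covariant operation in the original basis. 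Your route instead builds an explicit two-Kraus-operator map inside the qubit span of $\ket{\psi}$ and $\mathcal{PT}\ket{\psi}$, followed by a $\mathcal{PT}$-covariant unitary to reach the correct target ray. This can be made to work, but it is more laborious and the ``secondary point'' you flag (that $\ket{\phi}$ need not lie in that span) is a genuine extra step requiring you to show that any two pure states with the same value of $|\bra{\cdot}\mathcal{PT}\ket{\cdot}|$ are connected by a $\mathcal{PT}$-covariant unitary; the paper's global approach sidesteps this entirely. One concrete gap: the references you propose to invoke (\cite{Du2015}, \cite{Nielsen1999}) treat incoherent and LOCC transformations respectively, and neither yields Kraus operators commuting with complex conjugation; the correct black box here is precisely \cite{Gour2009JMP}.
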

\begin{proof}
The "only if" part is obvious, we only need to prove the "if" part.
Define $\cH_{\mathcal{PT}}=\set{\ket{\psi}\in \cH: \mathcal{PT} \ket{\psi}=\ket{\psi}}$,
then $\cH_{\mathcal{PT}}$ is a real Hilbert space with $dim \cH_{\mathcal{PT}}=dim \cH$ \cite{Uhlmann2000,Uhlmann2016}.
Thus the basis of $\cH_{\mathcal{PT}}$  is also a basis of $\cH$, which is called $\mathcal{PT}$-invariant basis. Here we use
$\set{\ket{i}_O}^d_{i=1}$ and $\set{\ket{i}_N}^d_{i=1}$ to denote the initial given basis of $\cH$ and the
 $\mathcal{PT}$-invariant basis,  $\ket{\psi}_O$ and $\ket{\psi}_N$ to denote the representation of state $\ket{\psi}$ in the
 basis $\set{\ket{i}_O}^d_{i=1}$ and $\set{\ket{i}_N}^d_{i=1}$, respectively.
 Obviously, there exists a unitary operator $U$ such that $\ket{\psi}_N=U\ket{\psi}_O$
 for any $\psi$.
Besides, the ${\cal PT}$ operator in the new  basis
 $\set{\ket{i}_N}^d_{i=1}$ is equivalent to the complex conjugation with respect to  this basis, denoted by $\mathcal{K}_{N}$.  Then
 ${\cal PT}\ket{\psi}_O=\mathcal{K}_N\ket{\psi}_N$
 and $\bra{\psi}{\cal PT}\ket{\psi}\equiv\bra{\psi}{\cal PT}\ket{\psi}_O=\bra{\psi}\mathcal{K}_N\ket{\psi}_N=
 \bra{\psi}U^\dag\mathcal{K}_NU\ket{\psi}_O$, which implies that ${\cal PT}=U^\dag{\cal K}_N U$.
Thus $|\bra{\psi}\mathcal{PT}\ket{\psi}|\leq |\bra{\phi}\mathcal{PT} \ket{\phi}|$ is equivalent to
$|\bra{\psi}\mathcal{K}_N\ket{\psi}_N|\leq |\bra{\phi}\mathcal{K}_N \ket{\phi}_N|$. Due to
 Ref.\cite{Gour2009JMP}, there exists a CPTP map $\Phi$ with Kraus operators
 $\set{K_\mu}$ such that $[K_\mu, \mathcal{K}_N]=0$ and $\Phi(\proj{\psi}_N)=\proj{\phi}_N$.
 Therefore, in the  initial given basis  $\set{\ket{i}_O}^d_{i=1}$,  the quantum operation $\widetilde{\Phi}$ with $\widetilde{K}_\mu=U^\dag K_\mu U$ satisfy the conditions $[\widetilde{K}_\mu, {\cal PT}]=0$ and $\widetilde{\Phi}(\proj{\psi})=\proj{\phi}$.

\end{proof}

The ${\cal PT}$-asymmetry measure introduced here is connected with the time reversal symmetry monotone in \cite{Gour2009JMP} up to a unitary. However,
 it is  hard to find  the unitary $U$ such that $\Gamma(\rho, \mathcal{PT})=\Gamma(U\rho U^\dag,\mathcal{K}_N)$ where $\mathcal{K}_N$ is the complex conjugation
   with the $\mathcal{PT}$-invariant basis. And such a unitary, especially a global unitary on the composite system,
may ruin the nonlocality of ${\cal PT}$ -symmetry.

Except the $\mathcal{PT}$-covariant operations,  we can also consider
the operations which map the $\mathcal{PT}$-symmetric state to $\mathcal{PT}$-symmetric state, that is
$\Phi(Sym(\mathcal{P}, \mathcal{T}))\subset Sym(\mathcal{P}, \mathcal{T})$. Such operations are called $\mathcal{PT}$-preserving operations.
Obviously, all $\mathcal{PT}$-covariant operations are  $\mathcal{PT}$-preserving operations.
Moreover, for any operations $\Phi=\sum_\mu K_\mu(\cdot)K^\dag_\mu$ with $K_\mu(Sym(\mathcal{P}, \mathcal{T}))K^\dag_\mu\subset Sym(\mathcal{P}, \mathcal{T})$,
we call such operation selective $\mathcal{PT}$-preserving operations, which is similar to
the definition of incoherent operations \cite{Baumgratz2014}. We can weaken the conditions
(C2) and (C2') to the following conditions:

(C2a) Monotone under $\mathcal{PT}$-preserving operations $\Phi_{\mathcal{PT}_{pre}}$, that is
$\Gamma(\Phi_{\mathcal{PT}_{pre}}(\rho), \mathcal{PT})\leq \Gamma(\rho,\mathcal{PT})$.

(C2'a) Monotone under selective $\mathcal{PT}$-preserving operations, that is
 $\sum_\mu p_\mu \Gamma(\rho_\mu, \mathcal{PT})\leq \Gamma(\rho, \mathcal{PT})$,
where $K_\mu Sym(\mathcal{P}, \mathcal{T}) K^\dag_\mu\subset Sym(\mathcal{P}, \mathcal{T})$ and
$\rho_\mu=K_\mu\rho K^\dag_\mu/p_\mu$ with $p_\mu=\mathrm{Tr}(K_\mu \rho K^\dag_\mu)$.

Therefore, we can also consider the $\mathcal{PT}$-asymmetry monotone which
satisfy the conditions (C1), (C2a), (C2a') and (C3),  which will be
explored in a future work.

\end{document}